\DeclareFontFamily{OT1}{pzc}{}
\DeclareFontShape{OT1}{pzc}{m}{it}{<-> s * [1.10] pzcmi7t}{}
\DeclareMathAlphabet{\mathpzc}{OT1}{pzc}{m}{it}
\crefname{figure}{fig.}{fig.}
\Crefname{figure}{Fig.}{Fig.}
\Crefname{section}{Sec.}{Sec.}
\definecolor{light-gray}{gray}{0.85}
\begin{document}

\begin{frontmatter}
  \title{A Categorical Normalization Proof for the Modal Lambda-Calculus}

  \author{Jason Z. S. Hu\thanksref{jasonemail}}	
  \and
  \author{Brigitte Pientka\thanksref{brigitteemail}}
  \address{School of Computer Science\\ McGill University\\			
    Montr\'eal, QC, Canada}  							
  \thanks[jasonemail]{Email: \href{mailto:zhong.s.hu@mail.mcgill.ca} {\texttt{\normalshape
        zhong.s.hu@mail.mcgill.ca}}}
  
  \thanks[brigitteemail]{Email: \href{mailto:bpientka@cs.mcgill.ca} {\texttt{\normalshape
        bpientka@cs.mcgill.ca}}}

  \begin{abstract}
    We investigate a simply typed modal $\lambda$-calculus, \lambox, due to Pfenning, Wong
    and Davies, where we define
    a well-typed term with respect to a context stack that captures the possible world
    semantics in a syntactic way. It provides logical foundation for
    multi-staged meta-programming. Our main contribution in this paper is a
    normalization by evaluation (NbE) algorithm for \lambox which we prove sound and
    complete. The NbE algorithm is a moderate extension to the standard presheaf model
    of simply typed $\lambda$-calculus. However, central to the model construction and
    the NbE algorithm is the observation of \emph{Kripke-style
      substitutions} on context stacks which brings together two previously separate
    concepts, structural modal transformations on context stacks and substitutions for
    individual assumptions. Moreover, Kripke-style substitutions allow us to give a
    formulation for contextual types, which can represent open code in a
    meta-programming setting. Our work lays the foundation for
    extending the logical foundation by Pfenning, Wong, and Davies
    towards building a practical, dependently typed foundation for
    meta-programming. 
  \end{abstract}
  
  \begin{keyword}
    modal $\lambda$-calculus, normalization by evaluation, presheaf model, contextual types
  \end{keyword}
\end{frontmatter}

\section{Introduction}\label{sec:intro}
The Curry-Howard correspondence fundamentally connects formulas and
proofs to types and programs. This view not only
provides logical explanations for computational
phenomena, but also serves as a guiding principle in designing type
theories and programming languages. 

Extending the Curry-Howard correspondence to modal logic has been fraught with
challenges. One of the first such calculi for the modal logic S4 were proposed by
Bierman and de Paiva~\cite{bierman_intuitionistic_1996,bierman_intuitionistic_2000}
and subsequently by Pfenning and Davies~\cite{pfenning_judgmental_2001}. A key
characteristic of this work is to separate the assumptions that are valid in every
world from the assumptions that presently hold in the current world. This leads to a
dual-context style formulation that satisfies substitution properties (see for example
\cite{DBLP:conf/icalp/GhaniPR98}).


In recent years, modal type systems based on this
dual-context style have received renewed attention and provided
insights into a wide range of seemingly unconnected areas: from
reasoning about universes in homotopy type
theory~\cite{licata_internal_2018,shulman_brouwers_2018} to 
mechanizing meta-theory~\cite{Pientka:POPL08,Pientka:LICS19}, to reasoning about
effects~\cite{Zyuzin:ICFP21}, and meta-programming 
\cite{Jang:POPL22}. This line of work builds on the dual-context
formulation of Pfenning and Davies. However, due to the permutation
conversions it is also challenging to extend to dependent type
theories and directly prove normalization via logical relations. 
An alternative to dual-context-style modal calculi is pursued by Clouston,
Birkedal and collaborators (see
\cite{clouston_fitch-style_2018,gratzer_implementing_2019}).
This line of work is inspired
by  the Fitch-style proof representation given by Borghuis~\cite{borghuis_coming_1994}. Following Borghuis they have 
been calling their representation the Fitch style.
Fitch-style systems model Kripke semantics~\cite{kripke_semantical_1963} and use
 locks to manage assumptions in a context.  To date, existing
formulations of $S4$ in Fitch style~\cite{clouston_fitch-style_2018,gratzer_implementing_2019}
mainly considers idempotency where $\square T$ is isomorphic to
$\square\square T$. However, this distinction is important from 
a computational view. For example, in multi-staged programming (see
\cite{pfenning_judgmental_2001,davies_modal_2001}) $\square T$
describes code generated in one stage, while $\square\square T$
denotes code generated in two stages. It is also fruitful to keep the
distinction from a theoretical point of view, as it allows for a fine
grained study of different, related modal logics. 

In this paper, we take \lambox, an intuitionistic version of modal logic $S4$, from
Pfenning, Wong and Davies~\cite{Pfenning95mfps,davies_modal_2001} as a starting
point.
Historically, \lambox is also motivated by Kripke semantics~\cite{kripke_semantical_1963} (see \cite[Section 3]{Davies:POPL96} and
\cite[Section 4]{davies_modal_2001}) and is hence referred to as the Kripke style. 
 Unlike Fitch-style systems where
 worlds are represented by segments between two adjacent ``lock''
 symbols, in $\lambox$, each world is represented by a context in a
 context stack. Nevertheless, the conversion between Kripke
and Fitch styles is largely straightforward\footnote{However, we note
  that \lambox has never been identified as or called a Fitch-style system.}. Here, we will often use ``context'' and ``world''
interchangeably. 
In \lambox, a term $t$ is typed in a context stack $\vGamma$ where
initially, the context stack consists of a single local context
which is itself empty (i.e. $\epsilon;\cdot$).  
\begin{mathpar}
  \mtyping[\epsilon; \Gamma_1; \ldots; \Gamma_n] t T

  \text{or}

  \mtyping t T
\end{mathpar}
The rightmost (or topmost) context represents the
current world.  In the $\square$ introduction rule, we extend the
context stack with a new world (i.e. new context). In the elimination
rule, if $\square T$ is true in a context stack
$\vGamma$, then $T$ is true in any worlds 
$\vGamma; \Delta_1;\ldots; \Delta_n$ reachable from
$\vGamma$. 
The choice of the level
$n$ corresponds to reflexivity and transitivity of the accessibility
relation between worlds in the Kripke semantics.
\begin{mathpar}
  \inferrule*
  {\mtyping[\vGamma; \cdot] t T}
  {\mtyping{\boxit t}{\square T}}

  \inferrule*
  {\mtyping t {\square T}}
  {\mtyping[\vGamma; \Delta_1; \ldots; \Delta_n]{\unbox n t}{T}}    
\end{mathpar}
There are two key advantages of this $\tunbox$ formulation in \lambox.
First, it introduces a syntactic convenience to use natural numbers to describe levels
and therefore allows us to elegantly capture various modal logics
differing only in one parameter of the $\tunbox$ rule. By introducing $\tunbox$
levels, $\square$ is naturally non-idempotent.  Having $\tunbox$ allows us to study the relation of various
sublogics of $S4$ and treat them uniformly and compactly. 

\begin{center}
  \begin{tabular}{|l|l|l|l|l|}
    \hline
    Axiom \textbackslash\ System & $K$          & $T$         & $K4$   & $S4$               \\ \hline
    $K$: $\square (S \func T) \to \square S \to \square T$ & $\checkmark$ & $\checkmark$ & $\checkmark$ & $\checkmark$ \\
    $T$: $\square T \to T$ &              & $\checkmark$ & &  $\checkmark$ \\
    $4$: $\square T \to \square \square T$ &              &              & $\checkmark$ & $\checkmark$ \\
    \hline
    $\tunbox$ level (UL) $n$ & \{ 1 \} & \{ 0, 1 \} & $\N^+$ & $\N$ \\
    \hline
  \end{tabular}  
\end{center}

Second, compared to dual-context formulation, it directly
corresponds to computational idioms quote ($\tbox$) and unquote
($\tunbox$) in practice, thereby giving a logical foundation to
multi-staged meta-programming \cite{davies_modal_2001}. In particular,
allowing $n=0$ gives us the power to
not only generate code, but also to run and evaluate code. 

A major stumbling block in reasoning about \lambox (see also
\cite{goubaultlarrecq:96}) is the fact that it is not obvious how to
define substitution properties for context stacks. This
prevents us from formulating an explicit substitution calculus for
\lambox which may serve as an efficient implementation. More importantly,
it also seems to be the bottleneck in developing normalization proofs
for \lambox that can be easily adapted to the various subsystems of S4.




In this paper, we make the following contributions:
\begin{enumerate}
\item We introduce the concept of of \emph{Kripke-style substitutions} on
  context stacks (\Cref{sec:usubst}) which combines two previously separate
  concepts: modal transformations on context stacks (such as modal
  weakening and fusion) and substitution properties for individual assumptions within
  a given context.
\item We extend the standard presheaf model~\cite{altenkirch_categorical_1995} for
  simply typed $\lambda$-calculus and obtain a normalization by evaluation (NbE)
  algorithm for \lambox in \Cref{sec:presheaf}.  One critical feature of our development is that the algorithm
  and the proof accommodate all four subsystems of S4 \emph{without change}.
\item As opposed to Nanevski et al.~\cite{nanevski_contextual_2008}, we provide a
  contextual type formulation in $\lambox$ inspired by our notion of Kripke-style
  substitutions in \Cref{sec:contextual} which can serve as a construct for describing open code in a
  meta-programming setting. 
\end{enumerate}

This work opens the door to a substitution calculus and normalization of a dependently
typed modal type theory.  There are A partial formalization~\cite{artifact} of this work in
Agda~\cite{agda,norell_towards_2007} and an accompanying technical
report~\cite{hu_investigation_2022}.




\begin{figure*}
  \raggedright
  \vspace{-5px}
  \fbox{$\mtyping t T$}\quad Term $t$ has type $T$ in context stack $\vGamma$
  \begin{mathpar}    
    \inferrule*
    {x : T \in \Gamma}
    {\mtyping[\vGamma; \Gamma] x T}

    \inferrule*
    {\mtyping[\vGamma; \cdot] t T}
    {\mtyping{\boxit t}{\square T}}

    \inferrule*
    {\mtyping t {\square T} \\ |\vDelta| = n}
    {\mtyping[\vGamma; \vDelta]{\unbox n t}{T}}
    
    \inferrule*
    {\mtyping[\vGamma;\Gamma, x : S]t T}
    {\mtyping[\vGamma;\Gamma]{\lambda x. t}{S \func T}}

    \inferrule*
    {\mtyping t {S \func T} \\ \mtyping s S}
    {\mtyping{t\ s}{T}}
  \end{mathpar}
  \vspace{-20px}
  
  \fbox{$\mtyequiv t{t'}T$}\quad Terms $t$ and $t'$ have type $T$ and are equivalent in
  context stack $\vGamma$
  \begin{align*}
    &\text{$\beta$ equivalence:}
    \qquad\qquad
    \inferrule
    {\mtyping[\vGamma; \cdot]{t}{T} \\
      |\vDelta| = n}
    {\mtyequiv[\vGamma; \vDelta]{\unbox{n}{(\boxit t)}}{t\{n / 0 \}}{T}}
    \qquad
    \inferrule
    {\mtyping[\vGamma;(\Gamma, x : S)]t T \\ \mtyping[\vGamma; \Gamma] s S}
    {\mtyequiv[\vGamma; \Gamma]{(\lambda x. t) s}{t[s/x]}{T}} \\
    &\text{$\eta$ equivalence:}
    \qquad\qquad\qquad
    \inferrule
    {\mtyping{t}{\square T}}
    {\mtyequiv{t}{\boxit{(\unbox 1 t)}}{\square T}}
    \qquad\qquad\qquad
    \inferrule
    {\mtyping{t}{S \func T}}
    {\mtyequiv t {\lambda x. (t\ x)}{S \func T}}
  \end{align*}
  \caption{Typing judgments and some chosen equivalence judgments}\label{fig:typing}
\end{figure*}

\section{Definition of \lambox}\label{sec:syntax}

In this section, we introduce the simply typed modal $\lambda$-calculus, \lambox, by Pfenning, Wong and Davies~\cite{Pfenning95mfps,davies_modal_2001} more
formally. We concentrate here on the fragment containing function
types $S \func T$, the necessity modality $\square
T$, and a base type $B$. 
\[
  \begin{array}{l@{~}l@{~}l}
  S, T  & := & \ B \sep \square T \sep S \func T \hfill \mbox{Types, \Typ} \\
  l, m, n & \multicolumn{2}{r}{\mbox{$\tunbox$ levels or offsets, $\mathbb{N}$}} \\
  x, y & \multicolumn{2}{r}{\mbox{Variables, $\Var$}} \\
  s, t, u &:=& x \sep \boxit t \sep \unbox n t \sep \lambda x. t \sep s\ t \hfill\qquad \mbox{Terms, $\Exp$} \\
  \Gamma, \Delta, \Phi &:= & \cdot \sep \Gamma, x : T \hfill \mbox{Contexts, \Ctx}\\
  \vGamma, \vDelta &:= &\ \epsilon \sep \vGamma; \Gamma \hfill \mbox{Context  stack, $\vect{\Ctx}$} \\
  w &:= & v \sep \boxit w \sep \lambda x. w  \hfill \mbox{Normal form, $\Nf$} \\
  v &:= & x \sep v\ w \sep \unbox n v \hfill \mbox{Neutral form, $\Ne$}
  \end{array}
\]
Following standard practice, we consider variables, applications, and $\tunbox$ neutral.
Functions, boxed terms and neutral terms are normal. Note that we
allow reductions under binders and inside boxed terms. As a
consequence, a function or a boxed term is normal, if their body is
normal. 

We define typing rules and type-directed equivalence between terms in
\Cref{fig:typing}.
 We only show the rules for $\beta$ and $\eta$
equivalence of terms, but the full set of rules can be found
in~\Cref{apx:equivalence}. 
We use Barendregt's abstract naming and
$\alpha$ renaming to ensure that variables are unique with respect to
context stacks.  The variable rule asserts that one can only refer to
a variable in the current world (the topmost context). In a typing judgment, we require all context
stacks to be non-empty, so the topmost context must exist.

From Kripke semantics' point of view, the introduction rule for $\square$ says a term
of $\square T$ is just a term of $T$ in the next world. The elimination rule brings
$\square T$ from some previous world to the current world. This previous world is
determined by the level $n$. As mentioned earlier, the choice of $n$ determines which
logic the system corresponds to. 
$|\vDelta|$ counts the number of contexts in $\vDelta$.

To illustrate, we recap how the axioms in \Cref{sec:intro} can be described in
\lambox. $K$ is defined by choosing $n=1$.  Axiom $T$ requires that $n=0$ and Axiom $4$
is only possible when $\tunbox$ levels (\RUL{s}) can be $>1$.
\[
  \begin{array}{llp{1cm}llp{1cm}ll}
    K &: \square (S \func T) \to \square S \to \square T & & T &: \square T \to T &  & A4 &: \square T \to \square \square T \\
    K\ f\ x &:=  \boxit{((\unbox 1 f) (\unbox 1 x))} & & T\ x &:= \unbox 0 x  &  & A4\ x &:= \boxit{(\boxit{(\unbox 2 x)})}
  \end{array}
\]
The term equivalence rules are largely standard. In the $\eta$ rule
for $\square$, we restrict $\tunbox$  to level $1$.  In the $\beta$ rule for $\square$, we rely on the
modal transformation operation~\cite{davies_modal_2001}, written as $\{n/0\}$, which 
allows us to transform the term $t$ which is well-typed in the context
stack $\vGamma; \cdot$ to the context stack $\vGamma; \vDelta$. We abuse slightly notation and use $;$ for both extending a
context stack with a context and appending two context stacks. 
We will discuss modal transformations more later in this section.

\subsection{Term Substitutions}

A term substitution simply replaces a variable $x$ with a term $s$
in a term $t$. It simply pushes the substitution inside the subterms of
$t$ and avoiding capture using renaming. Below, we simply restate the
ordinary term substitution lemma:  
\begin{lemma}[Term Substitution]$\;$\\
  If $\mtyping[\vGamma; (\Gamma, x{:}S, \Gamma'); \vDelta]t T$ and
  $\mtyping[\vGamma;(\Gamma, \Gamma')]s S$,
  then $\mtyping[\vGamma; (\Gamma, \Gamma'); \vDelta]{t[s/x]}T$.
\end{lemma}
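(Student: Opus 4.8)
The plan is to proceed by induction on the derivation of $\mtyping[\vGamma; (\Gamma, x{:}S, \Gamma'); \vDelta]t T$, generalizing over $\vGamma$, $\Gamma$, $\Gamma'$, $\vDelta$, $s$, and $T$ so that the induction hypothesis remains applicable when the context stack grows. Three of the five cases are routine. For application I would push the substitution into both subterms and invoke the induction hypothesis on each. For the $\lambda$-abstraction and $\square$-introduction rules the topmost context (respectively a freshly appended empty context) is extended, and here the generality of the statement is essential: in the $\lambda$ case I would absorb the new binding $y{:}S'$ into $\Gamma'$ when $\vDelta$ is empty, or into the top of $\vDelta$ otherwise; in the $\tbox$ case I would apply the induction hypothesis with $\vDelta$ replaced by $\vDelta; \cdot$. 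Since by Barendregt's convention the binder $y$ and the box leave the substitution position $(\Gamma, x{:}S, \Gamma')$ untouched, the reconstructed derivations type exactly $(\lambda y. t')[s/x]$ and $\boxit{(t'[s/x])}$.

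The variable case already exhibits the modal subtlety. By the variable rule $t = y$ is drawn from the topmost context of the full stack. If $\vDelta$ is nonempty, then $y$ lives strictly above the substitution point, so $y \neq x$ and $y[s/x] = y$ retains its type. If $\vDelta$ is empty, then $y$ lies in $(\Gamma, x{:}S, \Gamma')$: either $y = x$, in which case $y[s/x] = s$ and the conclusion is precisely the second hypothesis $\mtyping[\vGamma; (\Gamma, \Gamma')]s S$, or $y \neq x$, in which case $y$ survives in $(\Gamma, \Gamma')$ and retyping is immediate.

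The main obstacle is the $\tunbox$ elimination rule, where $\unbox n {t'}$ is typed by splitting the stack $\vGamma; (\Gamma, x{:}S, \Gamma'); \vDelta$ as $\vGamma''; \vDelta''$ with $|\vDelta''| = n$ and $t'$ of type $\square T'$ in $\vGamma''$. I would case split on whether this split lies inside $\vDelta$ or below it. When $n \le |\vDelta|$, the substitution position sits inside $\vGamma''$, so I would write $\vDelta = \vDelta_0; \vDelta''$, apply the induction hypothesis to $t'$ with $\vDelta_0$ in place of $\vDelta$, and reapply the $\tunbox$ rule with the same $n$. When $n > |\vDelta|$, the split descends below $(\Gamma, x{:}S, \Gamma')$, so $\vGamma''$ is a prefix of $\vGamma$, and the crucial observation is twofold. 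First, $x$ lives strictly above $\vGamma''$ and therefore cannot occur in $t'$, so $t'[s/x] = t'$ and $t'$ needs no retyping. Second, deleting $x$ from its context changes only the contents of a world inside the reachable segment $\vDelta''$ and not the number of worlds, so $|\vDelta''| = n$ still holds and the same $\tunbox$ instance retypes $\unbox n {t'}$ in $\vGamma; (\Gamma, \Gamma'); \vDelta$. This second point, that the modal offset counts contexts and is insensitive to their local contents, is exactly what makes substitution compatible with the Kripke-style stack structure and is the heart of the argument.
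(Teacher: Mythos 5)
Your proof is correct and is essentially the expected argument: the paper states this lemma without proof, deferring to the ordinary substitution lemma for \lambox (Pfenning--Wong/Davies), and that standard proof is precisely your structural induction, including the pivotal case split in the $\tunbox$ case on whether the level $n$ stays within $\vDelta$ (apply the IH to $t'$ with the shortened stack) or crosses below the substitution context (where $x\notin FV(t')$ and the relabelled stack $\vGamma;(\Gamma,\Gamma');\vDelta$ preserves the count $|\vDelta''|=n$, since offsets count worlds, not their contents). The one step you leave tacit is the local weakening of $s$ from $\vGamma;(\Gamma,\Gamma')$ to $\vGamma;(\Gamma,\Gamma',y{:}S')$ needed to instantiate the induction hypothesis in the $\lambda$ case when $\vDelta$ is empty --- a routine appeal to the usual structural properties of individual contexts that the paper likewise assumes.
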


\subsection{Modal Transformations (MoTs)}

In addition to the usual structural properties (weakening and
contraction) of individual contexts, \lambox also relies on structural
properties of context stacks, e.g. in the $\beta$ rule for
$\square$. In particular, we need to be able to weaken a
context stack $\vGamma; \vGamma'$ to $\vGamma; \vDelta; \vGamma'$ by
splicing in additional contexts $\vDelta$ (\emph{modal
  weakening}). \emph{Modal fusion} allows us to combine two adjacent
contexts in a context stack transforming a context stack
$\vGamma;\Gamma_0;\Gamma_1;\vDelta$ to a context stack $\vGamma;
(\Gamma_0,\Gamma_1); \vDelta$.

These modal transformations (\emph{MoTs}) require us to relabel the level $n$
associated with the $\tunbox$ eliminator. This is accomplished by the
operation $t\{n/l\}$. Assume that $t$ is well-typed
in a context stack $\vGamma$. If $n > 0$, then at position $l$ in the
stack (i.e. $\vGamma = \vGamma';\vDelta$ and $|\vDelta| = l$), we splice
in $n - 1$ additional contexts. If $n = 0$, then 
this can be interpreted as fusing the two adjacent contexts at position
$l$ in the stack $\vGamma$. 
\[
  \begin{array}{ll}
  x\{n/l\} &:= x \\
  \boxit t\{n/l\} &:= \boxit{(t\{n/l+1\})} \\[0.2em]
  \unbox m t \{n/l\} &:=
                       \begin{cases}
                         \unbox m {(t\{n/l-m\})} & \text{if $m \le l$} \\
                         \unbox{n + m - 1}t &\text{if $m > l$}
                       \end{cases} \\[0.2em]
  \lambda x. t \{n/l\} &:= \lambda x. (t\{n/l\}) \\
  s\ t\{n/l\} &:= (s\{n/l\})\ (t\{n/l\})
  \end{array}
\]

In the $\tbox$ case, 
$l$ increases by one, as we extend the context stack by a new
world. In the $\tunbox$ case, we distinguish cases based on the $\tunbox$
level $m$.  If $m \le l$, then we simply rearrange the \RUL{s} recursively in $t$. If $m > l$, we only
need to adjust the \RUL and do not recurse on $t$. 
MoTs satisfy the following lemma:
\begin{lemma}[Structural Property of Context Stacks]\label{lem:mtran-typ}$\;$\\
  If $\mtyping[\vGamma;\Gamma_0;\Delta_0;\cdots;\Delta_l]{t}{T}$, 
  then $\mtyping[\vGamma;\Gamma_0; \cdots; (\Gamma_n,
  \Delta_0);\cdots;\Delta_l]{t\{n/l\}}{T}$.
\end{lemma}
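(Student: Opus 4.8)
The plan is to proceed by induction on the derivation of $\mtyping[\vGamma;\Gamma_0;\Delta_0;\cdots;\Delta_l]{t}{T}$, or equivalently by structural induction on $t$, since the typing rules of \Cref{fig:typing} are syntax-directed. The essential point is that the induction hypothesis must be \emph{generalized over the splice level $l$ and over the ambient context stack}: the defining equations of $t\{n/l\}$ recurse with a \emph{shifted} level ($l+1$ under $\boxit{t}$ and $l-m$ under $\unbox{m}{t}$), so a statement with $l$ fixed would not be closed under the inductive calls. I would therefore prove the more general claim that for every decomposition of the stack exhibiting a splice point at depth $l$, applying $\{n/l\}$ together with the corresponding splice—inserting $n-1$ fresh contexts when $n\ge 1$, or fusing the two adjacent contexts when $n=0$—preserves the type $T$.

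The four structural cases are routine and all follow the same pattern of pushing the transformation through the term former and appealing to the induction hypothesis. For a variable $x$ we have $x\{n/l\}=x$; the transformation never removes a binding from the world in which $x$ lives—it only inserts fresh worlds or concatenates two contexts, both of which can only enlarge the relevant context—so the binding $x:T$ survives in the topmost context and the variable rule reapplies verbatim. The cases for $\lambda x.\,t$ and $s\ t$ leave $l$ unchanged: one applies the induction hypothesis to the immediate subterms and reassembles the derivation with the same $\func$-introduction or $\func$-elimination rule (the bound variable again lives in the untouched topmost context). For $\boxit t$, the premise types $t$ in the stack extended by a fresh empty world $\cdot$; the transformation recurses as $t\{n/l+1\}$, and since the new top world lies above the splice, the splice point is now at depth $l+1$. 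Thus the induction hypothesis at level $l+1$ delivers the transformed premise, and reapplying the $\square$-introduction rule (which strips the top world) closes the case.

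The heart of the argument is the $\unbox{m}{t}$ case, which splits on the comparison between the unbox level $m$ and the splice depth $l$. When $m\le l$, the eliminator pops $m$ worlds that all sit above the splice, so the side condition $|\vDelta|=m$ on the popped segment is untouched; we keep the level $m$, recurse via the induction hypothesis at the reduced level $l-m$ (matching $t\{n/l-m\}$), and reapply the $\square$-elimination rule. When $m>l$, the eliminator reaches below the splice: here $t\{n/l\}=\unbox{n+m-1}t$ leaves the subterm untouched, which is sound precisely because $t$ is typed in a prefix of the stack lying entirely below the splice point and hence unaffected by the transformation. The only obligation is the arithmetic side condition, namely that the popped segment now has size $n+m-1$: the original segment had size $m$, and the splice inserted $n-1$ additional contexts into it, giving $m+(n-1)=n+m-1$; the case $n=0$ specializes this to $m-1$, correctly recording that fusion \emph{removes} one world boundary. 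I expect this $m>l$ subcase to be the main obstacle, since it is where the level relabelling, the size side condition, and the uniform treatment of weakening ($n\ge 1$) and fusion ($n=0$) must all be reconciled; making the generalized statement precise enough that this bookkeeping goes through cleanly is the crux of the proof.
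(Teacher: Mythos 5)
Your overall strategy---structural induction on the typing derivation with the statement generalized over the splice depth $l$ and the ambient stack, shifting to $l+1$ under $\tbox$ and to $l-m$ in the $m\le l$ branch of $\tunbox$---is exactly the intended proof: the paper states the lemma without an inline proof and defers to its technical report, where the argument is this same generalized induction. Your level arithmetic in the $m>l$ branch ($m+(n-1)=n+m-1$ popped contexts, uniformly covering modal weakening $n\ge 1$ and fusion $n=0$) is also correct.

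However, one justification in the $m>l$ case is wrong as stated: you claim $t$ is ``typed in a prefix of the stack lying entirely below the splice point and hence unaffected by the transformation.'' This fails in the subcase $n=0$, $m=l+1$. There $\unbox m t$ pops exactly $\Delta_0;\cdots;\Delta_l$, so $t$ is typed in $\vGamma;\Gamma_0$; after the transformation the new level is $n+m-1=l$, which pops only $\Delta_1;\cdots;\Delta_l$ and leaves the stack $\vGamma;(\Gamma_0,\Delta_0)$---the topmost context of $t$'s stack \emph{is} affected, having been fused with $\Delta_0$. Concretely, for $l=0$: $\unbox 1 t$ typed in $\vGamma;\Gamma_0;\Delta_0$ maps to $\unbox 0 t$, whose typing requires $\mtyping[\vGamma;(\Gamma_0,\Delta_0)]{t}{\square T}$, whereas the premise only gives $\mtyping[\vGamma;\Gamma_0]{t}{\square T}$. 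You must additionally invoke ordinary local (intra-context) weakening of the topmost context, which is admissible in \lambox; with that one extra appeal the case closes. All other subcases ($n\ge 1$, or $n=0$ with $m>l+1$, where the fused pair lies wholly inside the popped segment) do leave $t$'s prefix untouched as you say, so the proof is repairable and otherwise sound.
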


We call the case where $n = 0$ \emph{modal fusion} or just
\emph{fusion}. 
Other cases  are \emph{modal weakening}. These names can be made sense of from the
following examples:
\begin{itemize}
\item When $n = l = 0$, the lemma states that if $\mtyping[\vGamma;\Gamma_0;\Delta_0]{t}{T}$, then
  $\mtyping[\vGamma;(\Gamma_0, \Delta_0)]{t\{0/0\}}{T}$. Notice that $\Gamma_0$ and $\Delta_0$
  in the premise are fused into one in the conclusion, hence ``modal fusion''.
\item When $n = 2$ and $l = 0$, the lemma states that if $\mtyping[\vGamma;\Gamma_0;\Delta_0]{t}{T}$, then
  $\mtyping[\vGamma;\Gamma_0; \Gamma_1; (\Gamma_2, \Delta_0)]{t\{2/0\}}{T}$. A new
  context $\Gamma_1$ is inserted into the stack and the topmost context is (locally)
  weakened by $\Gamma_2$, hence ``modal weakening''. 
\item In the $\beta$ rule for $\square$, a MoT $\{n/0\}$ is used to transform $t$ in
  context stack $\vGamma; \cdot$ to $\vGamma; \vDelta$. 
\item When $l > 0$, the leading $l$ contexts are skipped. If $n = 2$, $l = 1$ and
  $\mtyping[\vGamma;\Gamma_0;\Delta_0; \Delta_1]{t}{T}$, then
  $\mtyping[\vGamma;\Gamma_0; \Gamma_1; (\Gamma_2, \Delta_0);
  \Delta_1]{t\{2/1\}}{T}$. Here $\Delta_1$ is kept as is. 
\end{itemize}


\section{Kripke-style Substitutions}\label{sec:usubst}

Traditionally, we have viewed term substitutions and modal
transformations as two separate operations~\cite{davies_modal_2001}. This makes reasoning about
\lambox complex. For example, a composition of $n$ MoTs leads to up to $2^n$ cases in the
$\tunbox$ case. This becomes quickly
unwieldy. 
How can we avoid such case analyses
by \emph{unifying} MoTs and term substitutions as one operation that transforms context stacks? --  
We will view context stacks as a category and a special \emph{unifying} group of simultaneous
substitutions as morphisms (denoted by $\To$). MoTs are then simply a special case of
these morphisms. Lemma \ref{lem:mtran-typ} suggests to view a MoT as a morphism:
\[
  \begin{array}{c}
 \{n/l\} : \vGamma;\Gamma_0; \cdots; (\Gamma_n,
  \Delta_0);\cdots;\Delta_l \To \vGamma;\Gamma_0;\Delta_0;\cdots;\Delta_l    
  \end{array}
\]
because $\{n/l\}$ moves $t$ from the codomain context stack to the domain
context stack. 
If this group of substitutions
are closed under composition, then a category of context
stacks can be organized. 

\subsection{Composing MoTs}

If MoTs are just special substitutions, then the composition of substitutions must also
compose MoTs. The following diagram is a composition of multiple MoTs, forming a morphism $\vGamma \To \vDelta$:
\begin{center}
  \vspace{-5px}
  \begin{tikzpicture}
    \matrix (m) [matrix of math nodes, row sep=10pt]
    {
      \vGamma := \epsilon; & \Gamma_0; & \Delta_0; & \Delta_1; & (\Gamma_1, \Gamma_2, \Gamma_3); &
      \Delta_2; & \Gamma_4
      \\
      \vDelta := \epsilon; & \Gamma_0; & & & \Gamma_1; \Gamma_2; \Gamma_3; &
      & \Gamma_4 \\
    };
    \draw[stealth-, double] ($(m-2-1) + (-0.35, 0.3)$)  -- ($(m-1-1) + (-0.35, -0.3)$);
    \draw[<-] (m-2-2)  -- (m-1-2);
    \draw[<-] ($(m-2-5) + (-0.6, 0.3)$) -- ($(m-1-5) + (-0.6, -0.3)$);
    \draw[<-] ($(m-2-5) + (-0.1, 0.3)$) -- ($(m-1-5) + (-0.1, -0.3)$);
    \draw[<-] ($(m-2-5) + (0.4, 0.3)$) -- ($(m-1-5) + (0.4, -0.3)$);
    \draw[<-] (m-2-7) -- (m-1-7);
  \end{tikzpicture}
  \vspace{-5px}
\end{center}
This composition contains both fusion ($\Gamma_1$, $\Gamma_2$, $\Gamma_3$) and
modal weakening ($\Delta_0$, $\Delta_1$, $\Delta_2$). The thin arrows
correspond to contexts in both stacks. Some of these arrows are local identities
($\Gamma_0$ and $\Gamma_4$). They happen between contexts that are affected by modal
weakenings. The rest are local weakenings ($\Gamma_1$, $\Gamma_2$ and $\Gamma_3$); in
this case, they are affected by modal fusions. The first observation is that the size of gaps
between adjacent thin arrows varies, because it is determined by different MoTs. Another
observation is that thin arrows do not have to be just local weakenings; if they
contain general terms, then we obtain a general
simultaneous substitution. Moreover, thanks to the thin arrows, we know exactly which context stack each term 
should be well-typed in.  Combining all information, we arrive at
the definition of \emph{Kripke-style
  substitutions}:
\begin{definition}
  A Kripke-style substitution $\vsigma$, or just \textbf{K-substitution}, between context stacks is defined as
\[
  \begin{array}{ll}
    \sigma, \delta &:= () \sep \sigma, t/x
    \hfill\mbox{(Local) substitutions, $\Subst$}\\
     \vsigma, \vdelta &:=  \snil \sigma \sep \sext \vsigma n \sigma\qquad
                            \hfill\mbox{K-substitutions, $\Substs$} 
  \end{array}
\]
  \begin{mathpar}
    \inferrule
    {\sigma : \vGamma \To \Gamma}
    {{\snil { \sigma}} : \vGamma \To \epsilon;\Gamma}

    \inferrule
    {\vsigma : \vGamma \To \vDelta \\ |\vGamma'| = n \\ \sigma : \vGamma;\vGamma' \To \Delta}
    {\sext \vsigma n \sigma : \vGamma;\vGamma' \To \vDelta;\Delta}
  \end{mathpar}

 where a local substitution $\sigma : \vGamma \To \Gamma$ is defined as a list of
 well-typed terms in $\vGamma$ for all bindings in $\Gamma$. 
\end{definition}

Just as context stacks must be non-empty and consist of at least one
context, a K-substitution must have a topmost local substitution
written as $\snil \sigma$ in the base case. It provides a mapping for the context
stack $\epsilon;\Gamma$. We extend a K-substitution $\vsigma$ with
$\Uparrow^n\!\!\!\sigma$ where $n$ captures the offset due to a MoT and $\sigma$ is
the local substitution.  %
To illustrate, the morphism in the previous diagram can be represented as
$\varepsilon; \sext {\sext {\sext {\sext \id 3 \wk_1} 0 \wk_2} 0
  \wk_3} 2 \id$ where $\id$ is the local
identity substitution and
$\wk_i : \Gamma_0;\Delta_0;\Delta_1;(\Gamma_1,\Gamma_2,\Gamma_3) \To {\Gamma_i}$
are appropriate local weakenings. We break down this representation:
\begin{enumerate}
\item We start with $\snil\id : \epsilon ; \Gamma_0 \To \epsilon; \Gamma_0$.
\item We add an offset $3$ and a local weakening $\wk_1$, forming
  $\sext {\snil \id} 3 \wk_1 : 
   \epsilon ; \Gamma_0; \Delta_0;\Delta_1;(\Gamma_1,\Gamma_2,\Gamma_3) \To
  \epsilon; \Gamma_0; \Gamma_1$. The offset $3$ adds three contexts to
  the domain stack ($\Delta_0$, $\Delta_1$ and $\Gamma_1,\Gamma_2,\Gamma_3$). Local weakening
  $\wk_1$ extracts $\Gamma_1$ from $\Gamma_1,\Gamma_2,\Gamma_3$.
\item We extend the K-substitution to 
$\sext
   {\sext {\snil \id} 3  \wk_1}
    0  \wk_2 : \epsilon ; \Gamma_0; \Delta_0;\Delta_1;(\Gamma_1,\Gamma_2,\Gamma_3) \To
  \epsilon; \Gamma_0; \Gamma_1; \Gamma_2$. Since the offset associated
  with $\wk_2$ is $0$, no context is added to the domain stack. This effectively represents fusion. $\wk_2$ is
  similar to $\wk_1$. 
\item The rest of the K-substitution proceeds similarly. 
\end{enumerate}


Subsequently, we may simply write
$\vsigma; \sigma$ instead of $\sext \vsigma 1 \sigma$. In particular,
we will write $\vsigma ; \wk$ instead of $\sext \vsigma 1 \wk$ and 
$\vsigma ; \id$ instead of $\sext \vsigma 1 \id$.  We often omit offsets
that are $1$ for readability.

\subsection{Representing MoTs}

Now we show that MoTs are a special case of K-substitutions. Let $l := |\vDelta|$.  We
define modal weakenings as
\[
  \begin{array}{ll}
  \{n+1/l\} &: \vGamma;\Gamma_1; \cdots; (\Gamma_{n + 1}, \Delta_0);\vDelta \To
              \vGamma;\Delta_0;\vDelta \\
  \{n+1/l\} &= \varepsilon; 
   \underbrace{\id; \cdots;\id}_{|\vGamma|};  \Uparrow^{n + 1} \wk;
    \underbrace{\id; \cdots; \id}_{|\vDelta|}
  \end{array}
\]
where the offset $n + 1$ on the right adds $\Gamma_1; \cdots; (\Gamma_{n + 1},
\Delta_0)$ to $\vGamma$ in the domain
stack and $\wk : \vGamma;\Gamma_1; \cdots; (\Gamma_{n + 1}, \Delta_0) \To {\Delta_0}$.

Fusion is also easily defined:
\[
  \begin{array}{ll}
  \{0/l\} &: \vGamma;(\Gamma_1, \Gamma_2); \vDelta \To \vGamma;\Gamma_1; \Gamma_2; \vDelta \\
  \{0/l\} &= \varepsilon; \underbrace{\id; \cdots; \id}_{|\vGamma|}; \wk_1; \Uparrow^0\! 
              \wk_2; \underbrace{\id; \cdots; \id}_{|\vDelta|}
  \end{array}
\]
where the offset $0$ associated with $\wk_2$ allows us to fuse $\Gamma_1$ and $\Gamma_2$, and
$\wk_i : \vGamma;(\Gamma_1, \Gamma_2) \To {\Gamma_i}$ for $i \in \{1, 2\}$.

\subsection{Operations on K-Substitutions}

We now show that K-substitutions are morphisms in a category of context stacks. In order to define
composition, we describe two essential operations: 1) \emph{truncation}
($\trunc \vsigma n$) drops $n$ topmost substitutions from a
K-substitution $\vsigma$ and 2)
\emph{truncation offset} ($\Ltotal \vsigma n$) computes the \emph{total} number of contexts that need to be dropped from the domain context stack, given that we truncate $\vsigma$ by $n$. 
%
%
It computes the sum of $n$ leading
offsets. Let 
$\vsigma := \sext{\sext{\vsigma'}{m_n}{\sigma_n} ; \ldots}{m_1}{\sigma_1}
$, then $\Ltotal {\vsigma}{n} = m_n + \ldots + m_1$ and $\trunc \vsigma n = \vsigma'$. For the operation to be meaningful, $n$ must be less than $|\Delta|$.
\[
\begin{array}{ll}
\multicolumn{2}{l}{\text{Truncation Offset}\quad
  \Ltotal {\_} {\_} : (\vGamma \To \vDelta) \to \N \to \N }\\
  \Ltotal \vsigma 0 &:= 0 \\
  \Ltotal {\sext \vsigma  n  \sigma} {1 + m} &:= n + \Ltotal \vsigma m
\end{array}
\]
%
Truncation simply drops $n$ local
substitutions regardless of the offset that is associated with 
each local substitution. 
\[
  \begin{array}{ll}
\multicolumn{2}{l}{\text{Truncation}\quad 
  \trunc {\_} {\_} : (\vsigma : \vGamma \To \vDelta) \to  (n:\N) \to \trunc{\vGamma} {\Ltotal {\vsigma} n} \To
           \trunc {\vDelta} n}\\
  \trunc \vsigma 0 &:= \vsigma \\
  \trunc {(\sext \vsigma m  \sigma)} {1 + n} &:= \trunc {\vsigma} n
  \end{array}
\]
Similar to truncation of K-substitutions, we rely on truncation of contexts, written as $\trunc{\vGamma}{n}$ which simply drops $n$ contexts from the context stack $\vGamma$, i.e. if $\vGamma = \vGamma'; \Gamma_1; \ldots;\Gamma_n$, then $\trunc \vGamma n = \vGamma'$.
Note that $n$ must satisfy $n < |\vGamma|$, otherwise the operation
would not be meaningful. 
We emphasize that no further restrictions are placed on $n$ and hence
our definitions apply to any of the combinations of Axioms $K$, $T$ and $4$
described in the introduction. 


\subsection{K-Substitution Operation}

We now can define the K-substitution operation as follows:
\[
  \begin{array}{ll}
  x[\snil \sigma] &:= \sigma(x)\qquad\hfill
                       \mbox{lookup $x$ in $\sigma$}\\
  x[\sext \vsigma k \sigma] &:= \sigma(x)\\
  (\boxit t)[\vsigma] &:= \boxit{t[\sext \vsigma 1 {()}]} \\
  (\unbox n t)[\vsigma] &:= \unbox{\Ltotal {\vsigma}{n}}{(t[\trunc \vsigma n])} \\
  (\lambda x. t) [\snil \sigma] &:= \lambda x. t[\snil{(\sigma, x/x)}] \\
  (\lambda x. t) [\sext \vsigma k \sigma] &:= \lambda x. t[\sext \vsigma k {(\sigma, x/x)}] \\
  (s\ t)[\vsigma] &:= s[\vsigma]\ u[\vsigma]
  \end{array}
\]

In the $\tbox$ case, the recursive call adds to $\vsigma$ an empty local
substitution. Note that the offset must be $1$, since we extend in the
box-introduction rule our context stack with a new empty context.

The $\tunbox$ case for K-substitutions incorporates MoTs.
Instead of distinguishing cases based on the unbox
level $n$, we use the truncation offset operation to re-compute the
\RUL and 
the recursive call $t[\trunc \vsigma n]$ continues with $\vsigma$ truncated, because
$t$ is typed in a shorter stack.

Due to the typing invariants, we know that $\Ltotal \vsigma n$ is indeed defined for
all valid \RUL $n$ in all our target systems.  This fact can be checked easily. In
System $K$, since $n = 1$ and all MoT offsets in $\vsigma$ are $1$, we have that
$\Ltotal \vsigma n = 1$. In System $T$ where $n \in \{0, 1\}$ and $\vsigma$ only
contains MoT offsets $0$ and $1$, we have $\Ltotal \vsigma n \in \{0, 1\}$. In System
$K4$ where $n \ge 1$, $\Ltotal \vsigma n$ cannot be $0$ and thus
$\Ltotal \vsigma n \ge 1$. In System $S4$, since $n \in \N$,
$\Ltotal \vsigma n \in \N$ naturally holds.

The following lemma shows that K-substitutions are indeed the proper notion we seek:
\begin{lemma}
  If $\mtyping t T$ and $\vsigma : \vDelta \To \vGamma$, then
  $\mtyping[\vDelta]{t[\vsigma]}T$. 
\end{lemma}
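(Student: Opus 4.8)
The plan is to proceed by induction on the derivation of $\mtyping t T$, generalizing over the K-substitution $\vsigma$ together with its domain $\vDelta$ and codomain $\vGamma$, so that the induction hypothesis applies to each subterm paired with whatever K-substitution its typing context demands. Each case follows the corresponding clause of the K-substitution operation, and in each case I reconstruct a typing derivation in $\vDelta$ out of the derivations supplied by the induction hypothesis. The two easy cases are the variable and the application. For a variable $x$ with $x : T \in \Gamma$ in the topmost context, the operation returns $\sigma(x)$, where $\sigma$ is the topmost local substitution of $\vsigma$; by definition a local substitution $\sigma$ is a list of terms well-typed in the whole domain stack $\vDelta$ for the bindings of $\Gamma$, so $\sigma(x)$ has type $T$ in $\vDelta$. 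For an application the result is $t[\vsigma]\ s[\vsigma]$, and two appeals to the induction hypothesis followed by the application rule close the case.

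Next come the modal and binding cases, which carry the real content. For $\boxit t$ with $\mtyping[\vGamma;\cdot]{t}{T}$, the operation yields $\boxit{(t[\sext \vsigma 1 {()}])}$. I first check that $\sext \vsigma 1 {()} : \vDelta;\cdot \To \vGamma;\cdot$ is well-formed: the offset must be $1$ because the box rule extends the stack by exactly one fresh world, and the empty local substitution $()$ types trivially into the empty context. The induction hypothesis then gives $\mtyping[\vDelta;\cdot]{t[\sext \vsigma 1 {()}]}{T}$, whence the box rule delivers type $\square T$ in $\vDelta$. For $\lambda x. t$ I extend the topmost local substitution to $(\sigma, x/x)$; here I need a local weakening property, namely that the terms already recorded in $\sigma$, which are typed in the topmost context of $\vDelta$, remain well-typed after that context is extended by $x : S$. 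This is an instance of ordinary structural weakening for individual contexts, after which the induction hypothesis on the body and the $\lambda$ rule finish the case.

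The $\tunbox$ case is the crux and is where the truncation machinery does its work. Given $\mtyping[\vGamma_0;\vGamma_1]{\unbox n t}{T}$ with $\mtyping[\vGamma_0]{t}{\square T}$ and $|\vGamma_1| = n$, the operation produces $\unbox{\Ltotal \vsigma n}{(t[\trunc \vsigma n])}$. The typing signature of truncation gives $\trunc \vsigma n : \trunc \vDelta {\Ltotal \vsigma n} \To \vGamma_0$, so the induction hypothesis yields $\mtyping[\trunc \vDelta {\Ltotal \vsigma n}]{t[\trunc \vsigma n]}{\square T}$. It then remains to observe that $\Ltotal \vsigma n$ is precisely the number of contexts that $\vDelta$ loses under this truncation, that is $\vDelta = \trunc \vDelta {\Ltotal \vsigma n}; \vGamma'$ with $|\vGamma'| = \Ltotal \vsigma n$, which is exactly the content of the truncation-offset definition; applying the $\tunbox$ rule at level $\Ltotal \vsigma n$ then gives $\mtyping[\vDelta]{\unbox{\Ltotal \vsigma n}{\cdots}}{T}$.

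I expect the main obstacle to be this $\tunbox$ case, specifically the bookkeeping that the recomputed level $\Ltotal \vsigma n$ agrees with the length of the truncated-off portion of the domain and that truncation is well-defined at $n$, which rests on the typing invariant $n < |\vGamma_0;\vGamma_1|$ carried by the derivation. A secondary, purely routine obstacle is packaging the local weakening needed in the $\lambda$ case; since K-substitutions are built out of local substitutions, this likely requires first stating a small auxiliary lemma that a local substitution can be weakened in its target context, which then feeds the main induction.
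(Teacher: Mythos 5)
The paper never spells out a proof of this lemma in the text (it defers to the technical report and the Agda mechanization), and your structural induction on the typing derivation, generalized over $\vsigma$ with its domain and codomain, is exactly the standard argument one would expect there: your variable, application, and $\tbox$ cases are right, and your $\tunbox$ case correctly leans on the typing signatures of $\trunc{\vsigma}{n}$ and $\Ltotal{\vsigma}{n}$ together with the bound $\Ltotal{\vsigma}{n} < |\vDelta|$ guaranteed by $n < |\vGamma|$.

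One refinement is needed in your $\lambda$ case, where you propose an auxiliary lemma that ``a local substitution can be weakened in its target context.'' Stated only for the topmost local substitution, this is too weak. In the extension rule, an extension $\sext{\vsigma'}{0}{\sigma}$ with offset $0$ (available in systems $T$ and $S4$) has $|\vGamma'| = 0$, so the tail $\vsigma'$ is typed with the \emph{same} domain stack as $\sigma$ itself; consequently, when the topmost context of the domain stack $\vDelta$ is extended with $x : S$, every local substitution sitting under a chain of offset-$0$ extensions must be re-typed, not just the outermost one. The auxiliary lemma should therefore be stated as weakening for the whole K-substitution and proved by induction on its structure, with ordinary term-level weakening of an individual context (established separately, as a standard structural property, to avoid circularity with the main lemma) discharging each local case. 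This remains as routine as you predicted, but phrasing the helper only at the level of a single local substitution would leave the induction hypothesis too weak to close the offset-$0$ case.
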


\subsection{Categorical Structure}

We are now ready to organize 
K-substitutions into a category. First we define the identity K-substitution:
\[
  \begin{array}{l@{\;}l}
  \vect{\id}_{\vGamma} &: \vGamma \To \vGamma \\
  \vect{\id}_{\vGamma} &:= \snil \id; \id; \cdots; \id
  \end{array}
\]
where $\id$'s are appropriate local identities. We again omit the offsets
when we extend K-substitutions with $\id$, since they are $1$. 
We also omit the subscript $\vGamma$ on $\vect{\id}$ for readability.

Composition is defined in terms of the K-substitution operation:
\[
  \begin{array}{r@{\;\;}l}
  \_ \;\circ \_ \;&: \vGamma' \To \vGamma'' \to \vGamma \To \vGamma' \to \vGamma
                \To \vGamma'' \\
  (\snil \sigma) \circ \vdelta &:= \snil{(\sigma[\vdelta])} \\
  (\sext \vsigma n \sigma) \circ \vdelta &:= 
  \sext {(\vsigma \circ  {(\trunc \vdelta n)})}{\Ltotal \vdelta n}{(\sigma[\vdelta])}
  \end{array}
\]
where $\sigma[\vdelta]$ iteratively applies $\vdelta$ to all terms in
$\sigma$. In the recursive case, we continue with a truncated
K-substitution $\trunc \vdelta n$ and recompute the offset.

Verification of the categorical laws is then routine:
\begin{theorem}
  Context stacks and K-substitutions form a category with identities and composition
  defined as above. 
\end{theorem}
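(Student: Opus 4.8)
The plan is to verify the two categorical laws, left/right unitality and associativity, after first confirming that the composite of two composable K-substitutions is again well-typed. Since both $\circ$ and the K-substitution operation $t[\vsigma]$ are defined by recursion on their first K-substitution argument, every statement here is proved by induction on that argument (or, for the term-level lemma below, by induction on the term $t$), and the $\tbox$ and $\tunbox$ clauses are the only nonroutine cases. Well-typedness of $\vsigma \circ \vdelta$ follows from such an induction using the truncation typing rule together with the fact, a corollary of the K-substitution typing lemma, that $\sigma[\vdelta]$ is well-typed whenever $\sigma$ and $\vdelta$ are composable.

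Before the laws themselves I would isolate a handful of algebraic facts about truncation and the truncation offset, each proved by an elementary induction on a numeric argument. First, truncation is additive, $\trunc{(\trunc \vdelta a)}{b} = \trunc \vdelta {a+b}$. Second, the offset is additive, $\Ltotal \vdelta {a+b} = \Ltotal \vdelta a + \Ltotal{(\trunc \vdelta a)}{b}$. Third, both interact with composition: $\Ltotal{(\vsigma \circ \vdelta)}{n} = \Ltotal \vdelta {\Ltotal \vsigma n}$ and $\trunc{(\vsigma \circ \vdelta)}{n} = (\trunc \vsigma n) \circ \trunc \vdelta {\Ltotal \vsigma n}$, the last two following by induction on $n$ from the two additivity facts. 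The keystone is then the \emph{substitution fusion} lemma for terms, $(t[\vsigma])[\vdelta] = t[\vsigma \circ \vdelta]$, proved by induction on $t$. The variable, application and $\lambda$ cases are immediate; the $\tbox$ case reduces to checking that $(\sext \vsigma 1 {()}) \circ (\sext \vdelta 1 {()})$ agrees with $\sext{(\vsigma \circ \vdelta)}{1}{()}$, which is direct from the definition of $\circ$ together with $\trunc{(\sext \vdelta 1 {()})}{1} = \vdelta$ and $\Ltotal{(\sext \vdelta 1 {()})}{1} = 1$. The $\tunbox$ case is where the compatibility identities are consumed: unfolding $(\unbox n t)[\vsigma][\vdelta]$ rewrites the level to $\Ltotal \vdelta {\Ltotal \vsigma n} = \Ltotal{(\vsigma \circ \vdelta)}{n}$ and the recursive argument to $t[\trunc \vsigma n][\trunc \vdelta {\Ltotal \vsigma n}]$, which by the induction hypothesis and the truncation--composition identity equals $t[\trunc{(\vsigma \circ \vdelta)}{n}]$. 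Applied pointwise, fusion yields $(\sigma[\vdelta])[\vrho] = \sigma[\vdelta \circ \vrho]$ for local substitutions.

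For the unit laws I first show $t[\vect \id] = t$ by induction on $t$, using $\Ltotal{\vect\id}{n} = n$ and $\trunc{\vect\id}{n} = \vect\id$ in the $\tunbox$ case, and then derive $\vsigma \circ \vect\id = \vsigma$ and $\vect\id \circ \vsigma = \vsigma$ by induction on $\vsigma$. Associativity $(\vsigma \circ \vdelta) \circ \vrho = \vsigma \circ (\vdelta \circ \vrho)$ also goes by induction on $\vsigma$: the $\snil \sigma$ case is exactly the local fusion $(\sigma[\vdelta])[\vrho] = \sigma[\vdelta \circ \vrho]$, while the $\sext \vsigma n \sigma$ case splits into an offset component, settled by $\Ltotal{(\vdelta \circ \vrho)}{n} = \Ltotal \vrho {\Ltotal \vdelta n}$, a truncated-tail component, settled by the induction hypothesis together with $\trunc{(\vdelta \circ \vrho)}{n} = (\trunc \vdelta n) \circ \trunc \vrho {\Ltotal \vdelta n}$, and a local component, again local fusion.

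I expect the main obstacle to be precisely the $\tunbox$ clause of the fusion lemma and the two compatibility identities it rests on: aligning the bookkeeping of nested truncations and summed offsets is the only delicate point, and one must also confirm that $\Ltotal \vdelta n$ is defined at each level that arises, which is guaranteed, as noted earlier, by the typing invariants of each of the four target systems. Everything else is routine equational reasoning by induction.
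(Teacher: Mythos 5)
Your proposal is correct and takes essentially the same route as the paper, which declares the verification routine and isolates exactly your compatibility identities as its stated lemmas (distributivity of addition and of composition for truncation and truncation offset, i.e.\ $\Ltotal {\vsigma \circ \vdelta} n = \Ltotal {\vdelta}{\Ltotal \vsigma n}$ and $\trunc {(\vsigma \circ \vdelta)} n = (\trunc \vsigma n) \circ (\trunc {\vdelta} {\Ltotal \vsigma n})$). Your keystone fusion lemma $(t[\vsigma])[\vdelta] = t[\vsigma \circ \vdelta]$ and the inductions on $t$ and on $\vsigma$ match the detailed verification the authors defer to their technical report.
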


\subsection{Properties of Truncation and Truncation Offset}

Finally, we summarize some critical properties of truncation and truncation offset. Let $\vsigma : \vGamma' \To \vGamma$ and $\vdelta : \vGamma'' \To \vGamma'$: 
\begin{lemma}
  If $n < |\vGamma|$, then $\Ltotal \vsigma n  < |\vGamma'|$. 
\end{lemma}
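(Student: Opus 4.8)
The plan is to prove the inequality by induction on the derivation of $\vsigma : \vGamma' \To \vGamma$ (equivalently, on the structure of $\vsigma$), since $\Ltotal{\_}{\_}$ recurses on its numeric argument while pattern-matching on the shape of the K-substitution. I would flag at the outset that the hypothesis $n < |\vGamma|$ plays a double role: it guarantees that $\Ltotal \vsigma n$ is actually defined (we never attempt to truncate past the mandatory base local substitution $\snil{\cdot}$, which carries no offset), and it supplies exactly the slack that yields the strict inequality in the conclusion.

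For the base case $\vsigma = \snil \sigma$ the codomain is $\vGamma = \epsilon;\Gamma$, so $|\vGamma| = 1$ and the hypothesis $n < 1$ forces $n = 0$. Then $\Ltotal \vsigma 0 = 0$, and since every context stack contains at least one context we have $|\vGamma'| \ge 1 > 0$, as required. The same non-emptiness observation disposes of the subcase $n = 0$ uniformly for any $\vsigma$, so in the inductive step I only need to treat $n = 1 + m$.

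For the inductive step write $\vsigma = \sext {\vsigma_0} k \sigma$. Reading off the extension typing rule, the codomain splits as $\vGamma = \vGamma_0; \Gamma$ with $\vsigma_0 : \vGamma'_0 \To \vGamma_0$, and the domain splits as $\vGamma' = \vGamma'_0; \vGamma'_1$ with $|\vGamma'_1| = k$; hence $|\vGamma| = |\vGamma_0| + 1$ and $|\vGamma'| = |\vGamma'_0| + k$. The hypothesis $1 + m < |\vGamma_0| + 1$ reduces to $m < |\vGamma_0|$, which is precisely the side condition needed to apply the induction hypothesis to $\vsigma_0$, giving $\Ltotal {\vsigma_0} m < |\vGamma'_0|$. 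Unfolding the defining equation $\Ltotal {\sext {\vsigma_0} k \sigma} {1 + m} = k + \Ltotal {\vsigma_0} m$ and adding $k$ to the induction hypothesis yields $\Ltotal \vsigma n = k + \Ltotal {\vsigma_0} m < k + |\vGamma'_0| = |\vGamma'|$, which closes the case.

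I do not expect a genuine obstacle; this is a routine structural induction. The one point that requires care is the bookkeeping in the inductive step: aligning the codomain and domain of $\vsigma$ with the conclusion of the extension rule so that $|\vGamma|$ and $|\vGamma'|$ decompose correctly, and checking that the offset $k$ contributed to $\Ltotal \vsigma n$ is exactly the number $|\vGamma'_1|$ of domain contexts it splices in. This alignment is what makes the arithmetic $k + \Ltotal {\vsigma_0} m < k + |\vGamma'_0|$ reduce cleanly to the induction hypothesis.
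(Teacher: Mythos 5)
Your proof is correct and is exactly the routine argument intended here: the paper states this lemma without displaying a proof (deferring details to its technical report), and the evident argument is precisely your structural induction on $\vsigma$, dispatching $n = 0$ by non-emptiness of context stacks and the step case by unfolding $\Ltotal{\sext{\vsigma_0}{k}{\sigma}}{1+m} = k + \Ltotal{\vsigma_0}{m}$ against the decomposition $|\vGamma'| = |\vGamma'_0| + k$ forced by the extension typing rule. Your bookkeeping of domains and codomains, including the $k=0$ (fusion) case, checks out.
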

\begin{lemma}[Distributivity of Addition] 
  If $n + m < |\vGamma|$, then 
$\trunc \vsigma {(n + m)} = \trunc {(\trunc \vsigma n)} m$ 
and  $\Ltotal \vsigma {n + m} = \Ltotal {\vsigma} {n} + \Ltotal {\trunc \vsigma n}  m$. 
\end{lemma}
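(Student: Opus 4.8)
The plan is to prove both equalities simultaneously by induction on $n$, exploiting the fact that truncation $\trunc{\_}{\_}$ and truncation offset $\Ltotal{\_}{\_}$ are both defined by recursion on their natural-number argument, peeling off the topmost local substitution of $\vsigma$ at each step. The two statements share the same case structure, so a single induction on $n$ handles both.

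For the base case $n = 0$, both claims collapse to definitional identities. On the truncation side, $\trunc \vsigma {(0 + m)} = \trunc \vsigma m = \trunc{(\trunc \vsigma 0)}{m}$ since $\trunc \vsigma 0 = \vsigma$; on the offset side, $\Ltotal \vsigma {(0 + m)} = \Ltotal \vsigma m = 0 + \Ltotal {\trunc \vsigma 0} m$ since $\Ltotal \vsigma 0 = 0$. For the inductive step, assume the claims for $n$ and consider $1 + n$. Because $(1 + n) + m < |\vGamma|$, the codomain $\vGamma$ has at least two contexts, so $\vsigma$ is forced to be of the extended form $\sext{\vsigma'}{k}{\sigma}$, where $\vsigma'$ has codomain $\trunc \vGamma 1$. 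Rewriting the exponent as $(1+n)+m = 1 + (n+m)$ lets the defining equations fire: the truncation goal reduces via $\trunc{(\sext{\vsigma'}{k}{\sigma})}{1+n} = \trunc{\vsigma'}{n}$ to $\trunc{\vsigma'}{(n+m)} = \trunc{(\trunc{\vsigma'}{n})}{m}$, which is exactly the induction hypothesis for $\vsigma'$. For the offset goal, unfolding $\Ltotal{(\sext{\vsigma'}{k}{\sigma})}{1+n} = k + \Ltotal{\vsigma'}{n}$ on the left and the same together with $\trunc{(\sext{\vsigma'}{k}{\sigma})}{1+n} = \trunc{\vsigma'}{n}$ on the right leaves $k + \Ltotal{\vsigma'}{(n+m)} = k + \Ltotal{\vsigma'}{n} + \Ltotal{\trunc{\vsigma'}{n}}{m}$, which follows from the induction hypothesis after cancelling the shared summand $k$.

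The only point that requires genuine care is the bookkeeping of the well-definedness side condition, namely that the hypothesis $n + m < |{\trunc \vGamma 1}|$ needed to invoke the induction hypothesis on $\vsigma'$ actually holds. This is immediate once unpacked, since $|{\trunc \vGamma 1}| = |\vGamma| - 1$ and the assumption $(1 + n) + m < |\vGamma|$ gives $n + m < |\vGamma| - 1$. I expect this constraint tracking — keeping each truncation within the length of its codomain stack as the induction descends — to be the only subtlety; the remaining manipulations are just rearrangements of the recursive defining equations.
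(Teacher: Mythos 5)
Your proof is correct and is essentially the argument the paper intends (the paper omits it as routine, deferring details to the technical report and Agda mechanization): a simultaneous induction on $n$, unfolding the recursive definitions of $\trunc{\_}{\_}$ and $\Ltotal{\_}{\_}$, with the side condition $n+m < |\vGamma|$ correctly used both to force $\vsigma$ into the extended form $\sext{\vsigma'}{k}{\sigma}$ and to re-establish the bound $n+m < |\trunc{\vGamma}{1}|$ for the induction hypothesis. No gaps.
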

\begin{lemma}[Distributivity of Composition]
  If $n < |\vGamma|$,    
  then $\Ltotal {\vsigma \circ \vdelta} n = \Ltotal {\vdelta}{\Ltotal \vsigma n}$
  and $\trunc {(\vsigma \circ \vdelta)} n = (\trunc \vsigma n) \circ (\trunc {\vdelta} {\Ltotal \vsigma n })$.
\end{lemma}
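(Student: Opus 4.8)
The plan is to prove both equations together by induction on $n$, with the \emph{Distributivity of Addition} lemma doing all the real work of re-associating the index arithmetic. The base case $n = 0$ is immediate from the defining equations: $\Ltotal \vsigma 0 = 0$ and $\trunc \vsigma 0 = \vsigma$, so the offset equation reads $\Ltotal{\vsigma \circ \vdelta}{0} = 0 = \Ltotal \vdelta 0 = \Ltotal \vdelta {\Ltotal \vsigma 0}$, and the truncation equation reads $\trunc{(\vsigma \circ \vdelta)}{0} = \vsigma \circ \vdelta = (\trunc \vsigma 0) \circ (\trunc \vdelta 0)$.

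For the inductive step, since $n = 1 + n' \ge 1$ and $n < |\vGamma|$, the K-substitution $\vsigma$ cannot be of the base form $\snil \sigma$, so I would write $\vsigma = \sext{\vsigma'}{m}{\sigma}$ and unfold the composition to $\vsigma \circ \vdelta = \sext{(\vsigma' \circ (\trunc \vdelta m))}{\Ltotal \vdelta m}{(\sigma[\vdelta])}$. The induction hypothesis is then applied to the tail $\vsigma' : \trunc{\vGamma'}{m} \To \trunc \vGamma 1$ together with $\trunc \vdelta m$. For the offset, reading off the leading offset of the extended substitution gives $\Ltotal{\vsigma \circ \vdelta}{1 + n'} = \Ltotal \vdelta m + \Ltotal{\vsigma' \circ (\trunc \vdelta m)}{n'}$; the IH rewrites the second summand as $\Ltotal{\trunc \vdelta m}{\Ltotal{\vsigma'}{n'}}$, and the $\Ltotal$ half of Distributivity of Addition folds $\Ltotal \vdelta m + \Ltotal{\trunc \vdelta m}{\Ltotal{\vsigma'}{n'}}$ into $\Ltotal \vdelta {m + \Ltotal{\vsigma'}{n'}} = \Ltotal \vdelta {\Ltotal \vsigma {1 + n'}}$, as required.

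For the truncation equation, truncating the extended substitution discards its top, so $\trunc{(\vsigma \circ \vdelta)}{1 + n'} = \trunc{(\vsigma' \circ (\trunc \vdelta m))}{n'}$; the IH turns this into $(\trunc{\vsigma'}{n'}) \circ (\trunc{(\trunc \vdelta m)}{\Ltotal{\vsigma'}{n'}})$, and the truncation half of Distributivity of Addition collapses the nested truncation into $\trunc \vdelta {m + \Ltotal{\vsigma'}{n'}}$. Since $\trunc \vsigma {1 + n'} = \trunc{\vsigma'}{n'}$ and $\Ltotal \vsigma {1 + n'} = m + \Ltotal{\vsigma'}{n'}$ by definition, this is exactly $(\trunc \vsigma {1 + n'}) \circ (\trunc \vdelta {\Ltotal \vsigma {1 + n'}})$.

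The unfoldings are routine; the one point that needs care, and the main obstacle, is well-definedness of every truncation and offset occurring in the step, i.e. that the offsets $m$ and $\Ltotal{\vsigma'}{n'}$ stay within the length bounds needed for $\trunc \vdelta m$, the nested $\trunc{(\trunc \vdelta m)}{\Ltotal{\vsigma'}{n'}}$, and the IH. I would discharge this from the side condition $n < |\vGamma|$ together with the earlier bound lemma ($n < |\vGamma|$ implies $\Ltotal \vsigma n < |\vGamma'|$) and the typing of $\vsigma = \sext{\vsigma'}{m}{\sigma}$, which yields $\vsigma' : \trunc{\vGamma'}{m} \To \trunc \vGamma 1$ and hence $n' < |\trunc \vGamma 1|$, so that Distributivity of Addition is applicable at each invocation.
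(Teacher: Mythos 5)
Your proposal is correct and follows essentially the same route as the paper's (the paper itself only states the lemma and defers the details to the technical report): a simultaneous induction on $n$, unfolding the composition in the step case $\vsigma = \sext{\vsigma'}{m}{\sigma}$ and closing the gap with Distributivity of Addition, with the side condition $n < |\vGamma|$ ruling out the base form $\snil{\sigma}$ and, via the bound lemma $\Ltotal{\vsigma}{n} < |\vGamma'|$, licensing each truncation and each appeal to the addition lemma. Your explicit attention to well-definedness of $\trunc{\vdelta}{m}$ and the nested truncation is exactly the point that makes the routine induction go through.
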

These properties will be used in \Cref{sec:presheaf}. 
Later we will define other instances of truncation and truncation offsets but all
these instances satisfy properties listed here. Therefore, these properties
sufficiently characterize an algebra of truncation and truncation offset.


\section{Normalization: A Presheaf Model}\label{sec:presheaf}

In this section, we present our NbE algorithm based on a presheaf model. Once we determine
the base category, the rest of the construction is largely
standard following Altenkirch et al.~\cite{altenkirch_categorical_1995} with minor
differences, which we will highlight. 

To construct the presheaf model, we first determine the base category. Then we
interpret types, contexts and context stacks to presheaves and terms to natural
transformations. After that, we define two operations, reification and reflection, 
and use them to define the NbE algorithm. Last, we briefly discuss the completeness and
soundness proof. 
The algorithm is implemented in Agda~\cite{artifact}. 

\subsection{Kripke-style Weakenings}

In the simply typed $\lambda$-calculus (STLC), the base category is the category
of weakenings. In \lambox, we must consider the effects of MoTs and we
will use the more general notion of \emph{Kripke-style
  weakenings} or K-weakenings which characterizes how a well-typed term in \lambox moves from one context
stack to another.
\begin{definition}
  A K-weakening $\vgamma : \vGamma \To_w \vDelta$ is:
  \[
    \vgamma := \varepsilon \sep q(\vgamma) \sep p(\vgamma) \sep \sext
    \vgamma n {} \qquad\hfill     \mbox{(K-weakenings)}
  \]
  \begin{mathpar}
    \inferrule
    { }
    {\varepsilon: \epsilon ; \cdot \To_w \epsilon ; \cdot}

    \inferrule
    {\vgamma : \vGamma; \Gamma \To_w \vDelta; \Delta}
    {q(\vgamma) : \vGamma; (\Gamma, x : T) \To_w \vDelta; (\Delta, x : T)}

    \inferrule
    {\vgamma : \vGamma; \Gamma \To_w \vDelta; \Delta}
    {p(\vgamma) : \vGamma; (\Gamma, x : T) \To_w \vDelta; \Delta}

    \inferrule
    {\vgamma : \vGamma \To_w \vDelta \\ |\vGamma'| = n}
    {\sext \vgamma n {~} : \vGamma; \vGamma' \To_w \vDelta; \cdot}
    (\text{the offset $n$ depends on \RUL})
  \end{mathpar}
\end{definition}
The $q$ constructor is the identity extension of the K-weakening
$\vgamma$, while $p$ accommodates weakening of an individual
context. These constructors are typical in the category of
weakenings~\cite[Definition 2]{altenkirch_categorical_1995}. To accommodate MoTs, we add to the category of 
weakenings the last rule which transforms a context stack. 
In the last rule, the offset $n$ is again parametric, subject to the \emph{same} \RUL
as the syntactic system,  and its choice
determines which modal logic the system corresponds to.
Note that we also write $\vect \id$ for the identity K-weakening.
Following our 
truncation and truncation offset operations for K-substitutions
in~\Cref{sec:usubst}, we can easily define these operations
together with composition also for K-weakenings. We omit these
definitions for brevity and we simply note that a truncated 
K-weakening remains a K-weakening. Now we obtain the base category: 
\begin{lemma}
  K-weakenings form a category $\WC$. 
\end{lemma}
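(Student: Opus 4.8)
The plan is to exhibit the categorical data — an identity K-weakening at every context stack together with a composition operation $\_\circ\_$ — and then to verify the three category laws by induction on the structure of K-weakenings. The objects of $\WC$ are context stacks and the morphisms are K-weakenings $\vgamma : \vGamma \To_w \vDelta$; since the excerpt already names $\vect\id$ as the intended identity and promises a composition defined in analogy with K-substitution composition, the substance of the proof lies in checking that these satisfy the unit and associativity laws.

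First I would pin down the two pieces of data. The identity $\vect\id_\vGamma$ is built by recursion on $\vGamma$: the base $\epsilon;\cdot$ is $\varepsilon$, each context boundary above the base contributes an offset marker with offset $1$, and each binding contributes one application of $q$. Composition is defined by recursion on the arguments; the $q$ and $p$ clauses are the usual weakening-composition clauses (matching on the second argument), and the only nonstandard clause is the offset constructor, $(\sext{\vgamma}{n}{}) \circ \vdelta := \sext{(\vgamma \circ \trunc{\vdelta}{n})}{\Ltotal{\vdelta}{n}}{}$, which truncates $\vdelta$ by $n$ and recomputes the offset via $\Ltotal{\vdelta}{n}$ exactly as for K-substitutions.

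Because this offset clause is phrased entirely through $\trunc{}{}$ and $\Ltotal{}{}$, the category laws cannot be closed without the K-weakening analogues of the distributivity lemmas of \Cref{sec:usubst}, namely $\Ltotal{\vgamma\circ\vdelta}{n} = \Ltotal{\vdelta}{\Ltotal{\vgamma}{n}}$ and $\trunc{(\vgamma\circ\vdelta)}{n} = (\trunc{\vgamma}{n})\circ(\trunc{\vdelta}{\Ltotal{\vgamma}{n}})$, along with distributivity over addition. As the excerpt observes, these properties characterize the truncation algebra abstractly and do not depend on the local data carried by a morphism, so the same inductions transfer verbatim to K-weakenings; I would establish them first, as standalone lemmas by induction on $n$ and on the outer weakening, and I would also record the identity facts $\trunc{\vect\id}{n} = \vect\id$ and $\Ltotal{\vect\id}{n} = n$.

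With the truncation algebra in hand, the unit laws $\vect\id\circ\vgamma = \vgamma$ and $\vgamma\circ\vect\id = \vgamma$ follow by a direct induction using those identity facts. The main obstacle is associativity, $(\vgamma_1\circ\vgamma_2)\circ\vgamma_3 = \vgamma_1\circ(\vgamma_2\circ\vgamma_3)$, in the offset case: unfolding both sides produces nested truncations and offsets of $\vgamma_3$, and reconciling them requires precisely the distributivity lemmas above to rewrite $\trunc{(\vgamma_2\circ\vgamma_3)}{n}$ and $\Ltotal{(\vgamma_2\circ\vgamma_3)}{n}$, after which the inductive hypothesis on the truncated weakenings closes the gap. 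I therefore expect the crux to be the offset bookkeeping rather than any genuine conceptual difficulty — once the distributivity algebra is set up, the verification is routine.
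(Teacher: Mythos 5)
Your proposal is correct and takes essentially the same route the paper intends: the paper omits the verification but explicitly defers to the K-substitution case, defining identity, composition, truncation, and truncation offset for K-weakenings analogously and relying on the identity and distributivity laws of the truncation algebra listed at the end of \Cref{sec:usubst}, of which K-weakenings are noted to be an instance. Your observation that the offset clause of associativity is the crux, discharged by distributivity of composition ($\Ltotal{\vgamma\circ\vdelta}{n} = \Ltotal{\vdelta}{\Ltotal{\vgamma}{n}}$ and the corresponding equation for truncation), is exactly the intended argument.
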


\subsection{Presheaves}

The NbE proof is built on the presheaf category $\widehat \WC$ over $\WC$. $\widehat\WC$ has presheaves
$\WC^{op} \To \SetC$ as objects and natural transformations as morphisms. We know from
the Yoneda lemma that two presheaves $F$ and $G$ can form a presheaf exponential $F
\hfunc G$:
\[
  \begin{array}{ll}
  F \hfunc G &: \WC^{op} \To \SetC \\
  (F \hfunc G)_{\;\vGamma} &:= \forall \vDelta \To_w \vGamma. F_{\;\vDelta} \to G_{\;\vDelta}    
  \end{array}
\]
It is natural in $\vDelta$. As a convention, we use subscripts for both functorial applications and
natural transformation components. As in~\cite{altenkirch_categorical_1995}, presheaf exponentials model functions.
To model $\square$, we define 
\[
  \begin{array}{rl}
  \hsquare F & : \WC^{op} \To \SetC \\
  (\hsquare F)_{\;\vGamma} & := F_{\;\vGamma; \cdot}    
  \end{array}
\]
where $F$ is a presheaf. In Kripke semantics, $\hsquare$ takes $F$ to
the next world. Unlike presheaf exponentials which always exist regardless of the base
category, $\hsquare$ requires the base category to have the notion of ``the next
world''. This dependency in turn allows us to embed the Kripke structure of context
stacks into the base category, so that our presheaf model can stay a moderate
extension of the standard construction~\cite{altenkirch_categorical_1995}.

With this setup, we give the interpretations of types, contexts, and
context stacks in \Cref{fig:intp-presheaf}.  
The interpretation of the base type $B$ is the presheaf from context stacks to neutral
terms of type $B$. We write $\Ne\ T\ \vGamma$ for the set of
neutral terms of type $T$ in stack $\vGamma$. $\Ne\ T$ then is the presheaf
$\vGamma \mapsto \Ne\ T\
\vGamma$. $\Nf\ T\ \vGamma$ and $\Nf\ T$ are defined similarly.
The case $\intp{\square T} := \hsquare \intp{T}$ states that
semantically, a value of $\intp{\square T}$ is just a value of $\intp{T}$ in the next
world, which implicitly relies on
unified weakening's capability of expressing MoTs.
$\hat\top$ are $\hat\times$ are a chosen terminal object and products in $\widehat\WC$
and $*$ is \emph{the only element} in the chosen singleton set.

The interpretation of context stacks is more interesting. In the step case, $\vGamma;
\Gamma$ is interpreted as a product. To extract both part of $\vrho
\in \intp{\vGamma}_{\vDelta}$, we write $(\pi, \rho) := \vrho$ where $(n,\vrho') := \pi$. 
The first component, namely $\pi$, again consists of two parts:
1) the level $n$ satisfying $n < |\vDelta|$ which corresponds to the
MoTs that we support. We note that our definitions again apply to any
of the combinations of Axioms $K$, $T$ and $4$ depending on the choice of
$n$. 2) the recursive interpretation of $\vGamma$ in the truncated stack $\trunc
\vDelta n$ described by $\vrho'$. This stack truncation is necessary to interpret $\tunbox$. 

The second component, namely $\rho$, describes the interpretation of
the top-most context $\Gamma$. 
The fact that our interpretation of context stacks stores the level $n$
ultimately justifies the offsets stored in K-substitutions.

\begin{figure*}
  \begin{minipage}[t]{.2\textwidth}
    \begin{align*}
      \intp{\_} &: \Typ \to \WC^{op} \To \SetC \\
      \intp{B} &:= \Ne\ B \\
      \intp{\square T} &:= \hsquare \intp{T} \\
      \intp{S \func T} &:= \intp{S} \hfunc \intp{T}
    \end{align*}
  \end{minipage}
  \begin{minipage}[t]{.3\textwidth}
    \begin{align*}
      \intp{\_} &: \Ctx \to \WC^{op} \To \SetC \\
      \intp{\cdot} &:= \hat\top \\
      \intp{\Gamma, x : T} &:= \intp{\Gamma} \hat\times \intp{T}
    \end{align*}
  \end{minipage}
  \begin{minipage}[t]{.4\textwidth}
    \begin{align*}
      \intp{\_} &: \vect\Ctx \to \WC^{op} \To \SetC \\
      \intp{\epsilon; \Gamma} &:= \hat\top \hat \times \intp{\Gamma} \\
      \intp{\vGamma; \Gamma}_{\;\vDelta} &:= (\Sigma n <
                                         |\vDelta|. \intp{\vGamma}_{\;\trunc \vDelta n})
                                         \;\times\; \intp{\Gamma}_{\;\vDelta}
                                         \tag{where the offset $n$ depends on \RUL }
    \end{align*}
  \end{minipage}
  \caption{Interpretations of types, contexts and context stacks to presheaves}\label{fig:intp-presheaf}
\end{figure*}

\begin{lemma}[Functoriality]
  $\intp{T}$, $\intp{\Gamma}$ and $\intp{\vGamma}$ are presheaves. 
\end{lemma}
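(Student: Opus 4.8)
The plan is to prove the three statements simultaneously by mutual induction on the structure of the type $T$, the context $\Gamma$, and the context stack $\vGamma$. To exhibit each interpretation as a presheaf $\WC^{op} \To \SetC$ I must supply, in addition to the object assignment already given in \Cref{fig:intp-presheaf}, a functorial (monotonicity) action $\mon$ sending each K-weakening $\vgamma : \vDelta \To_w \vGamma$ to a function $\intp{-}_{\vGamma} \to \intp{-}_{\vDelta}$ between the corresponding sets, and then verify that $\mon$ preserves the identity K-weakening and composition of K-weakenings.

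For types the three cases proceed as follows. For the base type, $\intp{B} = \Ne\ B$ is a presheaf because neutral terms admit weakening along K-weakenings; the functor laws amount to the routine facts that weakening by $\vect\id$ acts as the identity on neutrals and that weakening commutes with composition of K-weakenings. For $S \func T$ the interpretation is the presheaf exponential $\intp{S} \hfunc \intp{T}$, which is a presheaf by the standard Yoneda construction whenever $\intp{S}$ and $\intp{T}$ are, the latter being available by the induction hypothesis. For $\square T$ I must check that $\hsquare$ is an endofunctor on $\widehat\WC$: given $\vgamma : \vDelta \To_w \vGamma$, I lift it to the K-weakening $\sext \vgamma 1 {} : \vDelta;\cdot \To_w \vGamma;\cdot$ (a new empty top context on each side, with offset $1$) and set the action of $\hsquare \intp{T}$ on $\vgamma$ to be the action of $\intp{T}$ on this lift; functoriality then follows from the functoriality of the lift together with the induction hypothesis that $\intp{T}$ is a presheaf.

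For contexts the cases are immediate from standard constructions: $\intp{\cdot} = \hat\top$ is the terminal presheaf, and $\intp{\Gamma, x{:}T} = \intp{\Gamma} \hat\times \intp{T}$ is a pointwise product of presheaves, both presheaves by the induction hypothesis. The base case for stacks, $\intp{\epsilon;\Gamma} = \hat\top \hat\times \intp{\Gamma}$, is again a product of presheaves.

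The substantial case, and the one I expect to be the main obstacle, is the step case for context stacks, $\intp{\vGamma;\Gamma}_{\vDelta} = (\Sigma\, n < |\vDelta|.\ \intp{\vGamma}_{\trunc \vDelta n}) \times \intp{\Gamma}_{\vDelta}$. The second factor is dispatched as a product with $\intp{\Gamma}$; the content lies in the dependent-sum factor. Here, for a K-weakening $\vgamma : \vDelta' \To_w \vDelta$ and an element $(n,\vrho')$ with $n < |\vDelta|$ and $\vrho' \in \intp{\vGamma}_{\trunc \vDelta n}$, I re-index the level to the truncation offset $\Ltotal \vgamma n$ and transport $\vrho'$ along the truncated K-weakening $\trunc \vgamma n : \trunc{\vDelta'}{\Ltotal \vgamma n} \To_w \trunc \vDelta n$ using the monotonicity of $\intp{\vGamma}$ furnished by the induction hypothesis, yielding $(\Ltotal \vgamma n,\ \mon_{\intp{\vGamma}}(\trunc \vgamma n)(\vrho'))$. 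This mirrors exactly the offset-and-truncation bookkeeping used in the composition of K-substitutions. For this to be well defined I need $\Ltotal \vgamma n < |\vDelta'|$, which is the truncation-offset bound lemma for K-weakenings, so that the transported datum lands in the correct fibre $\intp{\vGamma}_{\trunc{\vDelta'}{\Ltotal \vgamma n}}$. Verifying the two functor laws then reduces to the distributivity properties of truncation and truncation offset: the identity law uses $\Ltotal{\vect\id}{n} = n$ and $\trunc{\vect\id}{n} = \vect\id$ together with the inductive identity law for $\intp{\vGamma}$, while the composition law uses $\Ltotal{\vgamma \circ \vgamma'}{n} = \Ltotal{\vgamma'}{\Ltotal \vgamma n}$ and $\trunc{(\vgamma \circ \vgamma')}{n} = (\trunc \vgamma n) \circ (\trunc{\vgamma'}{\Ltotal \vgamma n})$ together with the inductive composition law. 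These distributivity facts hold for K-weakenings exactly as the analogous lemmas were stated for K-substitutions, so the genuine difficulty is purely the dependent-type bookkeeping of ensuring each re-indexed level and transported element inhabits the intended fibre so that the two functor equations type-check and then fall out of the distributivity lemmas.
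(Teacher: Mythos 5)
Your proposal is correct and matches the proof the paper intends (the paper itself only sketches it, deferring details to the technical report and Agda formalization): the only nontrivial case is the step case for context stacks, which you handle exactly as the paper does, re-indexing the stored level via the truncation offset $\Ltotal{\vgamma}{n}$ and transporting along the truncated K-weakening $\trunc{\vgamma}{n}$, with the functor laws discharged by the identity and distributivity properties of the truncation/truncation-offset algebra that the paper states for K-substitutions and notes hold for K-weakenings as well. Your treatment of $\hsquare$ via the lift $\sext{\vgamma}{1}{}$ is likewise the paper's construction, so no further comment is needed.
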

Functoriality means the interpretations also act on morphisms in $\WC$. Given $\vgamma
: \vGamma \To_w \vDelta$ and $a \in \intp{T}_{\;\vDelta}$, we write $a[\vgamma] \in
\intp{T}_{\;\vGamma}$. We intentionally overload the notation for applying K-substitutions
to draw a connection. This notation also applies for morphism actions of
$\intp{\Gamma}$ and $\intp{\vGamma}$.

\begin{figure*}
  \vspace{-10px}
  \[
    \begin{array}{l@{\;}l}
      \multicolumn{2}{l}{\mbox{Evaluation}\qquad       \intp{\_}\!\quad:  \mtyping t T \to \intp{\vGamma} \To \intp{T}}\\
      \multicolumn{2}{l}{\mbox{Expanded form}~     \intp{t}_{\;\vDelta} : \intp{\vGamma}_{\;\vDelta} {\;{\to}\;} \intp{T}_{\;\vDelta}}\\
      \intp{x}_{\;\vDelta} ((\_, \rho))
                  & := \rho(x) \hfill\mbox{lookup $x$ in $\rho$} \\
      \intp{\boxit t}_{\;\vDelta}(\vrho)
                  &:=  \intp{t}_{\;\vDelta; \cdot} (((1, \vrho), *)) \\
      \intp{\unbox n t}_{\vDelta}(\vrho)
                  &:= \intp{t}_{\;\trunc \vDelta m}(\trunc \vrho n)[\vect \id; \Uparrow^m]
                    \qquad \mbox{where $m := \Ltotal \vrho n$ and $\vect
                    \id;\Uparrow^m: \vDelta \To_w \trunc \vDelta m ; \cdot$} \\
      \intp{\lambda x. t}_{\;\vDelta}(\vrho)
                  &:= (\vgamma : \vDelta' \To_w \vDelta)(a) \mapsto
                    \intp{t}_{\vDelta'} ((\pi, (\rho, a)))
                    \hfill \text{where $(\pi, \rho) := \vrho[\vgamma]$} \\[0.2em]
      \intp{t\ s}_{\;\vDelta}(\vrho) & := \intp{t}_{\;\vDelta} (\vrho, \vect{\id}_{\;\vDelta}~,~
                                       \intp{s}_{\;\vDelta}(\vrho))     \\[0.5em]
      \multicolumn{2}{l}{\mbox{Reification}\quad \downarrow^T : \intp{T} \To \Nf\ T}\\
      \downarrow^B_{\vGamma}(a) &:= a \\[0.75em]
      \downarrow^{\square T}_{\vGamma}(a)
                     &:= \boxit \downarrow^T_{\vGamma; \cdot}(a)
                       \hfill \mbox{notice $a \in (\hsquare\intp{T})_{\vGamma} = \intp{T}_{\vGamma; \cdot}$}\\[0.75em]
      \downarrow^{S \func T}_{\vGamma; \Gamma}(a)
                     &:= \lambda x. \downarrow^T_{\vGamma; (\Gamma, x : S)}(a~(p(\vect \id)~,~{\uparrow^S_{\vGamma; (\Gamma, x : S)}\!(x)}))
                      \hfill \mbox{where $p(\vect \id) : \vGamma; \Gamma, x{:} S \To_w \vGamma; \Gamma$}\\[0.5em]
      \multicolumn{2}{l}{\mbox{Reflection} \quad \uparrow^T : \Ne\ T \To \intp{T}}\\
      \uparrow^B_{\vGamma}(v) &:= v \\[0.75em]
      \uparrow^{\square T}_{\vGamma}(v)
                     &:= \uparrow^T_{\vGamma; \cdot}({\unbox 1 v}) \\[0.75em]
      \uparrow^{S \func T}_{\vGamma; \Gamma}(v)
                     &:= (\vgamma : \vDelta \To_w \vGamma;\Gamma)(a)
                       \mapsto \uparrow^T_{\vDelta}(v[\vgamma]\ \downarrow^S_{\vDelta}(a))
    \end{array}
  \]
  \vspace{-15px}
  \caption{Evaluation, reification and reflection functions}\label{fig:intp-nat}
\end{figure*}

\subsection{Evaluation}

The interpretation of well-typed terms to natural transformations, or
\emph{evaluation} (see \Cref{fig:intp-nat}),
relies on truncation and the truncation offset. 
These operations are defined below and follow the same principles that lie behind the corresponding operations for syntactic K-substitutions.  
\[
  \begin{array}{llp{.5cm}ll}
\multicolumn{2}{l}{\mbox{Truncation Offset}\; \Ltotal{\_}{\_} :
    \intp{\vGamma}_{\;\vDelta} \to \N \to \N }
    & & \multicolumn{2}{l}{\mbox{Truncation}\; \trunc {\_} {\_} : (\vrho :
    \intp{\vGamma}_{\;\vDelta})\; (n:\N) \to \intp{\trunc\vGamma n}_{\trunc \vDelta
      {\Ltotal \vrho  n}}} \\
  \Ltotal \vrho 0 &:= 0 & & \trunc \vrho 0 &:= \vrho \\
    \Ltotal {((n, \vrho),\rho)} {1 + m}&:= n + \Ltotal \vrho m
                                         & & \trunc {((n, \vrho), \rho)}{1 + m} &:= \trunc \vrho m
  \end{array}
\]
Most cases in evaluation are straightforward.  In the $\tbox$ case the recursion
continues with an extended environment and $t$ in the next world. In the $\tunbox$
case, we first recursively interpret $t$ with a truncated environment and then the
result is K-weakened. This is because from the well-typedness of $t$, we know
$\typing[\trunc\vGamma n]t{\square T}$, so
$\intp{t}_{\;\trunc \vDelta m}(\trunc \vrho n)$ gives an element in set
$\intp{\square T}_{\;\trunc \vDelta m} = \intp{T}_{\;\trunc \vDelta m; \cdot}$. To
obtain our goal $\intp{T}_{\;\vDelta}$, we can apply monotonicity of $\intp{T}$ using a
morphism $\vDelta \To_w \trunc\vDelta m; \cdot$, which is given by
$\vect \id; \Uparrow^m$.
The cases related to functions are identical to~\cite{altenkirch_categorical_1995}.
In the $\lambda$ case, since we need to return a set function due to presheaf
exponentials, we use $\mapsto$ to construct this function. We first K-weaken the
environment $\vrho$ and then extend it with the input value $a$.  
In the application case, since $\intp{t}$ gives us a presheaf exponential, we just
need to apply it to $\intp{s}$. We simply supply $\vect\id_{\vDelta}$ for the 
K-weakening argument because no extra weakening is needed. 

The following lemma proves that $\intp{t}$ is a natural transformation in $\vDelta$:
\begin{lemma}[Naturality]\label{lem:presh:nat}
  If $\mtyping t T$ and $\vrho \in \intp{\vGamma}_{\;\vDelta}$, then for all
  K-weakenings $\vgamma : \vDelta' \To_w \vDelta$, we have $\intp{t}_{\;\vDelta'}(\vrho[\vgamma]) =
  \intp{t}_{\;\vDelta}(\vrho)[\vgamma]$. 
\end{lemma}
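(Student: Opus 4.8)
The plan is to prove the equation by induction on the typing derivation of $t$ (equivalently, on the structure of $t$), verifying the naturality square in each of the five cases. Throughout I rely on the Functoriality Lemma to make sense of the action $a[\vgamma]$, and on the remark at the end of \Cref{sec:usubst} that the semantic truncation $\trunc{\vrho}{n}$ and truncation offset $\Ltotal{\vrho}{n}$ obey the same distributivity laws as their syntactic counterparts. The two identities I will lean on are the semantic analogue of Distributivity of Composition, namely that weakening an environment commutes with truncation and truncation offset:
\[
  \Ltotal{\vrho[\vgamma]}{n} = \Ltotal{\vgamma}{\Ltotal{\vrho}{n}}
  \qquad\text{and}\qquad
  \trunc{(\vrho[\vgamma])}{n} = (\trunc{\vrho}{n})[\trunc{\vgamma}{\Ltotal{\vrho}{n}}],
\]
where $\Ltotal{\vgamma}{\_}$ and $\trunc{\vgamma}{\_}$ are the truncation operations for K-weakenings.

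The variable, abstraction, and application cases are essentially those of the standard presheaf model~\cite{altenkirch_categorical_1995}. For a variable $x$, the goal unfolds to the statement that lookup in the topmost context commutes with the morphism action on $\intp{\Gamma}$, which holds because $\intp{\Gamma}$ is a finite product of the $\intp{T}$ and its action is componentwise. For $\lambda x.\,t$ and $s\ t$ the two sides are elements of a presheaf exponential; unfolding the definitions and using that the action on $F \hfunc G$ precomposes the weakening argument, the equalities follow from the induction hypotheses on the immediate subterms, functoriality of $\intp{\vGamma}$ (so that $\vrho[\vgamma'][\vgamma] = \vrho[\vgamma' \circ \vgamma]$), and the fact that $\vect{\id}$ is the unit of composition.

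The $\tbox$ case is slightly more involved but still routine. Unfolding, the left-hand side is $\intp{t}_{\;\vDelta'; \cdot}(((1, \vrho[\vgamma]), *))$ and the right-hand side is $\intp{t}_{\;\vDelta; \cdot}(((1, \vrho), *))[\vgamma]$, where the outer action is that of $\hsquare\intp{T}$, i.e.\ the action of $\intp{T}$ under the canonical lift of $\vgamma$ to a K-weakening $\vDelta'; \cdot \To_w \vDelta; \cdot$. Applying the induction hypothesis to $t$ at this lifted weakening reduces the goal to showing that $((1, \vrho), *)$ weakened by the lift equals $((1, \vrho[\vgamma]), *)$; this is a direct computation from the action on $\intp{\vGamma; \cdot}$, using that the stored level $1$ is preserved and that truncating the lift by $1$ returns $\vgamma$.

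The $\tunbox$ case is where the real work lies, and I expect it to be the \emph{main obstacle}. Write $m := \Ltotal{\vrho}{n}$ and $m' := \Ltotal{\vrho[\vgamma]}{n}$. Expanding both sides and invoking the two identities above, I first rewrite $m' = \Ltotal{\vgamma}{m}$ and $\trunc{(\vrho[\vgamma])}{n} = (\trunc{\vrho}{n})[\trunc{\vgamma}{m}]$, so the inner evaluation on the left becomes $\intp{t}_{\;\trunc{\vDelta'}{m'}}((\trunc{\vrho}{n})[\trunc{\vgamma}{m}])$. The induction hypothesis applied to $t$ (which is typed in $\trunc{\vGamma}{n}$) at the truncated weakening $\trunc{\vgamma}{m} : \trunc{\vDelta'}{m'} \To_w \trunc{\vDelta}{m}$ pushes this weakening outward, turning both sides into the single element $\intp{t}_{\;\trunc{\vDelta}{m}}(\trunc{\vrho}{n}) \in \intp{T}_{\;\trunc{\vDelta}{m}; \cdot}$ acted on by two composite K-weakenings. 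By contravariant functoriality it then remains to prove that these composites agree, i.e.\ that the square of K-weakenings
\[
  (\vect{\id}; \Uparrow^{m}) \circ \vgamma
  \;=\;
  \widehat{(\trunc{\vgamma}{m})} \circ (\vect{\id}; \Uparrow^{m'})
  \qquad : \quad \vDelta' \To_w \trunc{\vDelta}{m}; \cdot
\]
commutes, where $\widehat{(\trunc{\vgamma}{m})}$ is the lift of $\trunc{\vgamma}{m}$ to the $\cdot$-extended stacks. Establishing this coherence law for K-weakenings---checking that the MoT offset introduced by $\Uparrow$ lines up exactly with the offset recorded by the truncated weakening---is the technical heart of the argument; it mirrors Distributivity of Composition at the level of K-weakenings, and once it is in hand the two composites collapse and the case closes.
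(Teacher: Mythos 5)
Your proposal is correct and follows essentially the route the paper intends: the paper itself only states Lemma~\ref{lem:presh:nat} (deferring the proof to the technical report), but its ``Adaptiveness'' discussion identifies exactly the ingredients you use---that K-weakenings and the environments in $\intp{\vGamma}$ instantiate the same truncation/truncation-offset algebra as K-substitutions and satisfy the identity and distributivity laws listed at the end of \Cref{sec:usubst}. In particular, your reduction of the $\tunbox$ case to the commuting square $(\vect\id;\Uparrow^{m})\circ\vgamma = \widehat{(\trunc{\vgamma}{m})}\circ(\vect\id;\Uparrow^{m'})$, which indeed checks out by unfolding K-weakening composition together with the distributivity and identity laws, is precisely the technical heart of the intended argument.
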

The lemma states that the result of evaluation in a K-weakened environment is the same
as K-weakening the result evaluated in the original environment.
In STLC, despite being
a fact, naturality is not used anywhere in the proof. In \lambox, 
since K-weakenings encode MoTs, naturality is necessary 
in the completeness proof. 

After evaluation, we obtain a semantic value of the semantic type $\intp{T}$.  In the last
step, we use a \emph{reification} function to convert the semantic value back to a
normal form. Reification is defined mutually with \emph{reflection} in
\Cref{fig:intp-nat}. As suggested by their signatures, they are both natural
transformations, but our proof does not rely on this fact. Both reification and
reflection are type-directed, so after reification we obtain $\beta\eta$ normal
forms. We reify a semantic value $a$ of box type $\square T$ in a context stack
$\vGamma$ recursively extending the context stack to $\vGamma;\cdot$. Note that $a$
has the semantic type $(\hsquare\intp{T})_{\vGamma}$ which is defined as
$\intp{T}_{\vGamma; \cdot}$.  In the case of function type
$S \func T$, since $a$ is a presheaf exponential, we supply a 
K-weakening and a value, the result of which is then recursively reified.

Reflection turns neutral terms into semantic values. We reflect neutral terms of type
$\square T$ recursively and incrementally extending the context stack with one context
at a time. 
In the function case, to construct a presheaf exponential, we first take two
arguments $\vgamma$ and $a$. Since $v$ is a neutral term, $v[\vgamma]$ is also neutral but now
well-typed in $\vDelta$. Both recursive calls to reification and reflection then go
down to $\vDelta$ instead. 

Normalization by evaluation (NbE) takes a well-typed term $t$ in a context stack $\vGamma$ as input, interprets $t$ to its semantic counterpart in the initial environment, and reifies it back. Before defining NbE more formally, we define the identity environment $\intp{\vGamma}_{\vGamma}$ that is used as the initial environment:

\[
  \begin{array}{l@{}l}
  \uparrow &: (\vGamma : \vect\Ctx) \to \intp{\vGamma}_{\vGamma} \\
  \uparrow^{\epsilon; \cdot} &:= (*, *) \\
  \uparrow^{\vGamma; \cdot} &:= ((1, \uparrow^{\vGamma}), *) \\
  \uparrow^{\vGamma; \Gamma, x : T} &:= (\pi, (\rho, \uparrow^{T}_{\vGamma; \Gamma, x : T}\!\!(x)))\hfill
    \;\;\mbox{where $(\pi, \rho) := \uparrow^{\vGamma; \Gamma} [p(\vect \id)]$}
  \end{array}
\]

Finally we define the NbE algorithm:
\begin{definition}(Normalization by Evaluation)
If $\mtyping t T$, then 
\[  
 \nbe^T_{\vGamma}(t) := \downarrow^T_{\vGamma} (\intp{t}_{\vGamma}(\uparrow^{\vGamma}))
\]
\end{definition}

\subsection{Completeness and Soundness}

The algorithm given above is sound and complete:
\begin{theorem}
  [Completeness] If $\mtyequiv{t}{t'}{T}$, then $\nbe^T_{\vGamma}(t) = \nbe^T_{\vGamma}(t')$. 
\end{theorem}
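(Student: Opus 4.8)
The plan is to follow the standard partial-equivalence-relation (PER) argument for completeness of normalization by evaluation, adapted to the Kripke-style setting. First I would define, by recursion on the type $T$ and parametrically in a context stack $\vGamma$, a Kripke PER $\sim_T$ on the semantic values of $\intp{T}_{\vGamma}$: at the base type, $a \sim_B b$ holds iff $a$ and $b$ are the same neutral term; at $\square T$, we set $a \sim_{\square T} b$ at $\vGamma$ to mean $a \sim_T b$ at $\vGamma; \cdot$, using $(\hsquare\intp{T})_{\vGamma} = \intp{T}_{\vGamma;\cdot}$; and at $S \func T$ we take the usual Kripke clause, requiring that related arguments $a' \sim_S b'$ at every $\vgamma : \vDelta \To_w \vGamma$ be sent to related results. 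This is then lifted pointwise to environments, written $\vrho \sim_{\vGamma} \vrho'$ at $\vDelta$.

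Before the main induction I would record three structural facts about $\sim$. It is symmetric and transitive, so the symmetry and transitivity rules of term equivalence are immediate; and it is monotone under K-weakening, i.e. stable along $\To_w$, which makes the function clause well-formed and underwrites the naturality steps below. All three follow by routine induction on $T$, the weakening case invoking \Cref{lem:presh:nat}. I would then prove the customary pair of mutually inductive lemmas tying $\sim$ to reification and reflection: reflection sends a neutral term to a value $\sim_T$-related to itself, and $\sim_T$-related values reify to syntactically identical normal forms, i.e. $a \sim_T b$ at $\vGamma$ implies $\downarrow^T_{\vGamma}(a) = \downarrow^T_{\vGamma}(b)$.

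The heart of the proof is the fundamental theorem: if $\mtyequiv{t}{t'}{T}$ and $\vrho \sim_{\vGamma} \vrho'$ at $\vDelta$, then $\intp{t}_{\vDelta}(\vrho) \sim_T \intp{t'}_{\vDelta}(\vrho')$ at $\vDelta$, proved by induction on the equivalence derivation. The reflexivity, symmetry, transitivity and congruence rules fall out directly from the PER laws and the induction hypotheses, and the two $\eta$ rules are discharged by the extensional ($\square$ and function) clauses of $\sim$ together with \Cref{lem:presh:nat}: for instance, for the $\eta$ rule of $\square$ one unfolds $\intp{\boxit{(\unbox 1 t)}}$ and uses that the roundtrip K-weakening $\sext{\vect\id}{1}{}$ acts as the identity on the relevant semantic value.

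I expect the main obstacle to be the soundness of the two $\beta$ rules, each of which demands a semantic substitution lemma. For the $\beta$ rule for functions this is the ordinary single-variable lemma $\intp{t[s/x]}_{\vDelta}(\vrho) = \intp{t}_{\vDelta}((\pi, (\rho, \intp{s}_{\vDelta}(\vrho))))$ with $(\pi, \rho) := \vrho$, established by induction on $t$. The delicate case is the $\beta$ rule for $\square$, whose right-hand side $t\{n/0\}$ is a modal transformation: I must show that $\intp{t\{n/0\}}_{\vDelta}(\vrho)$ agrees with the value $\intp{t}_{\trunc\vDelta m;\,\cdot}(((1, \trunc\vrho n), *))\,[\vect\id;\Uparrow^{m}]$ obtained by unfolding $\intp{\unbox n (\boxit t)}$, where $m := \Ltotal{\vrho}{n}$. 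This is exactly where the Kripke-style substitution machinery pays off: reading the MoT $\{n/0\}$ as the K-weakening $\sext{\vect\id}{n}{}$ and applying \Cref{lem:presh:nat}, the equation reduces to the distributivity laws for truncation and the truncation offset proved in \Cref{sec:usubst}. Since those laws, and the definitions of truncation and truncation offset, are uniform in the offset, no case split on the \RUL is needed and completeness holds for all four subsystems $K$, $T$, $K4$ and $S4$ simultaneously. Finally I would verify that the identity environment is self-related, $\uparrow^{\vGamma} \sim_{\vGamma} \uparrow^{\vGamma}$ at $\vGamma$ (building on reflexivity of $\sim$ on reflected variables), and instantiate the fundamental theorem at $\vrho = \vrho' = \uparrow^{\vGamma}$; reifying the resulting related values gives $\nbe^T_{\vGamma}(t) = \nbe^T_{\vGamma}(t')$.
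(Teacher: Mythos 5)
Your proposal is correct in outline, but it takes a genuinely different route from the paper. The paper introduces no PER at all: it proves the stronger, strict statement that equivalent terms evaluate to \emph{literally the same} natural transformation --- if $\mtyequiv{t}{t'}{T}$ then $\intp{t}_{\vDelta}(\vrho) = \intp{t'}_{\vDelta}(\vrho)$ for every $\vrho \in \intp{\vGamma}_{\vDelta}$ --- by induction on the equivalence derivation, applying \Cref{lem:presh:nat} in the cases about $\square$ (together with semantic substitution/MoT lemmas for the two $\beta$ rules, exactly the delicate points you identify). Completeness is then immediate: instantiate at the identity environment $\uparrow^{\vGamma}$ and apply $\downarrow^T_{\vGamma}$, which, being a function, sends equal values to equal normal forms; no reflection/reification coherence lemmas, no monotonicity of a relation, and no self-relatedness of $\uparrow^{\vGamma}$ are needed. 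This direct route is available because in the set-valued presheaf model the $\eta$ rules hold up to actual equality of set functions (given extensionality in the meta-theory) --- precisely the assumption your Kripke PER is engineered to avoid. So your argument costs more machinery (the PER, its stability under $\To_w$, the mutual lemmas that reflection yields self-related values and that $\sim_T$-related values reify identically), and what it buys is independence from function extensionality, making it better suited to an intensional mechanization; the paper's is shorter but leans on the extensionality of $\SetC$. One detail you should make explicit: ``lifted pointwise to environments'' must include \emph{equality} of the stored levels in the $\Sigma n < |\vDelta|$ components of $\intp{\vGamma; \Gamma}_{\vDelta}$, since otherwise $\Ltotal{\vrho}{n}$ and $\Ltotal{\vrho'}{n}$ could differ in the congruence case for $\unbox{n}{t}$ and the two sides would not even live over comparable truncated stacks. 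Your handling of the $\beta$ rule for $\square$ --- reading the MoT $\{n/0\}$ as the K-weakening $\sext{\vect\id}{n}{}$ and reducing to the truncation and truncation-offset distributivity laws of \Cref{sec:usubst}, uniformly in the \RUL --- coincides with the paper's underlying argument and correctly preserves its adaptiveness to all four subsystems.
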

\begin{theorem}
  [Soundness] If $\mtyping t T$, then $\mtyequiv{t}{\nbe^T_{\vGamma}(t)}T$. 
\end{theorem}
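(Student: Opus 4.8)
The plan is to prove soundness by the standard \emph{gluing} (Kripke logical relations) technique, relating each well-typed term to its semantic value. For every type $T$ and stack $\vGamma$ I would define a relation $\glu{t \sim a}^T_{\vGamma}$ between a term with $\mtyping t T$ and a semantic value $a \in \intp{T}_{\;\vGamma}$, by induction on $T$. At the base type, $\glu{t \sim a}^B_{\vGamma}$ holds iff $\mtyequiv{t}{a}{B}$ (here $a$ is itself a neutral term, since $\intp{B} = \Ne\ B$). At function type, $\glu{t \sim a}^{S \func T}_{\vGamma}$ holds iff for every K-weakening $\vgamma : \vDelta \To_w \vGamma$ and every glued pair $\glu{s \sim b}^S_{\vDelta}$ we have $\glu{t[\vgamma]\ s \sim a_{\vDelta}(\vgamma, b)}^T_{\vDelta}$. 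At box type, $\glu{t \sim a}^{\square T}_{\vGamma}$ holds iff $\glu{\unbox 1 t \sim a}^T_{\vGamma; \cdot}$, which is well typed because $\intp{\square T}_{\;\vGamma} = \intp{T}_{\;\vGamma; \cdot}$; the Kripke quantification over K-weakenings, together with monotonicity below, supplies the stability under modal transformations needed for $\tunbox$ at higher levels.

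I would then establish the properties that make the relation usable. \emph{Closure under term equivalence} (if $\glu{t \sim a}^T_{\vGamma}$ and $\mtyequiv{t'}{t}{T}$ then $\glu{t' \sim a}^T_{\vGamma}$) is immediate by induction on $T$ and transitivity of $\approx$. \emph{Monotonicity} (if $\glu{t \sim a}^T_{\vGamma}$ and $\vgamma : \vDelta \To_w \vGamma$ then $\glu{t[\vgamma] \sim a[\vgamma]}^T_{\vDelta}$) is proved by induction on $T$ and crucially uses the Naturality lemma (\Cref{lem:presh:nat}) so that syntactic and semantic weakening commute. The heart of the argument is the pair of \textbf{reification} and \textbf{reflection} lemmas, proved by simultaneous induction on $T$: reification states that $\glu{t \sim a}^T_{\vGamma}$ implies $\mtyequiv{t}{\downarrow^T_{\vGamma}(a)}{T}$, and reflection states that for neutral $v$, $\mtyequiv{t}{v}{T}$ implies $\glu{t \sim \uparrow^T_{\vGamma}(v)}^T_{\vGamma}$. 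Their box and function cases line up exactly with the shapes of $\downarrow$ and $\uparrow$ and use the $\eta$ rules of \Cref{fig:typing}. Finally I would lift gluing to environments: a K-substitution $\vsigma : \vDelta \To \vGamma$ is glued to an environment $\vrho \in \intp{\vGamma}_{\;\vDelta}$, written $\glu{\vsigma \sim \vrho}_{\vGamma}$, componentwise, with the offsets carried by $\vsigma$ matching the levels stored in $\vrho$ and truncations kept in lockstep, so that $\Ltotal \vsigma n = \Ltotal \vrho n$ and $\glu{\trunc\vsigma n \sim \trunc\vrho n}$ whenever $n < |\vGamma|$. The \textbf{fundamental theorem} then reads: if $\mtyping t T$ and $\glu{\vsigma \sim \vrho}_{\vGamma}$ with $\vsigma : \vDelta \To \vGamma$, then $\glu{t[\vsigma] \sim \intp{t}_{\;\vDelta}(\vrho)}^T_{\vDelta}$, proved by induction on the typing derivation; the variable, $\lambda$ and application cases follow the presheaf argument of~\cite{altenkirch_categorical_1995}, using monotonicity and Naturality to absorb the weakenings of the $\lambda$ case.

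The \textbf{modal cases are the main obstacle}. The $\tbox$ case requires showing that extending $\vsigma$ and $\vrho$ with offset $1$ and the empty local substitution preserves gluing, matching $(\boxit t)[\vsigma] = \boxit{t[\sext\vsigma 1 {()}]}$ with $\intp{\boxit t}$; this is routine once the environment relation is set up. The delicate case is $\tunbox n$. Here the syntactic side computes $(\unbox n t)[\vsigma] = \unbox{\Ltotal\vsigma n}{(t[\trunc\vsigma n])}$ while the semantic side computes $\intp{t}_{\;\trunc\vDelta m}(\trunc\vrho n)[\vect\id; \Uparrow^{m}]$ with $m = \Ltotal\vrho n$; I must show these two offsets agree and that the truncated pair is still glued, which is precisely where the synchronization built into $\glu{\vsigma \sim \vrho}_{\vGamma}$ and the distributivity lemmas for truncation and truncation offset (\Cref{sec:usubst}, which hold for the semantic instances as well) are used. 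Applying the induction hypothesis at $\square T$ to the truncated data and unfolding the box clause gives $\glu{\unbox 1 (t[\trunc\vsigma n]) \sim \intp{t}_{\;\trunc\vDelta m}(\trunc\vrho n)}^{T}_{(\trunc\vDelta m); \cdot}$; monotonicity along the MoT-bearing K-weakening $\vect\id; \Uparrow^{m} : \vDelta \To_w (\trunc\vDelta m); \cdot$, together with the computation $(\unbox 1 s)[\vect\id; \Uparrow^{m}] = \unbox m s$, then transports this to the required $\glu{\unbox m (t[\trunc\vsigma n]) \sim \intp{\unbox n t}_{\;\vDelta}(\vrho)}^T_{\vDelta}$. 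Getting the level bookkeeping on the two sides to coincide --- the very issue that motivated unifying MoTs and substitutions into K-substitutions --- is the point requiring the most care.

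To conclude, I would show the identity K-substitution $\vect\id_{\vGamma}$ is glued to the identity environment $\uparrow^{\vGamma}$, by induction on $\vGamma$, using the reflection lemma on each fresh variable exactly as $\uparrow^{\vGamma}$ reflects it. Instantiating the fundamental theorem with this pair yields $\glu{t[\vect\id] \sim \intp{t}_{\;\vGamma}(\uparrow^{\vGamma})}^T_{\vGamma}$, i.e.\ $\glu{t \sim \intp{t}_{\;\vGamma}(\uparrow^{\vGamma})}^T_{\vGamma}$ since $t[\vect\id] = t$. Applying reification then gives $\mtyequiv{t}{\downarrow^T_{\vGamma}(\intp{t}_{\;\vGamma}(\uparrow^{\vGamma}))}{T}$, which is exactly $\mtyequiv{t}{\nbe^T_{\vGamma}(t)}{T}$, as required.
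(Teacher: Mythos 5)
Your proposal is correct and follows the same overall architecture as the paper's proof: a Kripke gluing model indexed by context stacks, monotonicity along K-weakenings, mutually inductive reification/reflection lemmas, an environment-level gluing of K-substitutions synchronized with truncation and truncation offset, a fundamental theorem by induction on typing, and the identity environment glued to $\vect\id$ via reflection, with reification closing the argument. The one genuine divergence is the clause for $\square T$: the paper defines $t \sim a \in \glu{\square T}_{\;\vGamma}$ by quantifying over \emph{all} unboxings at once --- for every $\vDelta$, $\unbox{|\vDelta|}{t} \sim a[\vect\id;\Uparrow^{|\vDelta|}] \in \glu{T}_{\;\vGamma;\vDelta}$ --- whereas you take only the single instance $\unbox{1}{t} \sim a$ at $\vGamma;\cdot$ and recover the general case by monotonicity along the MoT-bearing K-weakenings $\sext{\vect\id}{m}{} : \vGamma;\vDelta \To_w \vGamma;\cdot$, using the computation $(\unbox{1}{s})[\vect\id;\Uparrow^{m}] = \unbox{m}{s}$. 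The two clauses are interderivable precisely because monotonicity along such weakenings (offset $m \neq 1$, including fusion at $m = 0$) is provable by induction on $T$ --- a lemma both developments need anyway, since even with the saturated clause, monotonicity at $\square T$ reduces to monotonicity at $T$ along a lifted weakening. What the paper's clause buys is that the $\tunbox$ case of the fundamental theorem becomes a direct instantiation at $\vDelta$ with $|\vDelta| = \Ltotal{\vrho}{n}$, with no monotonicity application at that point; your minimal clause keeps the relation smaller but concentrates the level bookkeeping in the monotonicity lemma, exactly as you flag. One small attribution slip: monotonicity of the gluing relation rests on functoriality of $\intp{T}$ and on the syntactic fact that $\approx$ is preserved under K-weakening, not on \Cref{lem:presh:nat}; Naturality of evaluation is what you need in the $\lambda$ and application cases of the fundamental theorem (and in completeness), where you do invoke it correctly.
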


Due to space limitation, we are not able to present the whole proof. Fortunately, the
proof is very standard~\cite{altenkirch_categorical_1995}. To prove completeness, we simply need to prove that equivalent
terms always evaluate to the same natural transformation:
\begin{lemma}
  If $\mtyequiv{t}{t'}{T}$, then for any $\vrho \in \intp{\vGamma}_{\vDelta}$,
  $\intp{t}_{\vDelta}(\vrho) = \intp{t'}_{\vDelta}(\vrho)$.
\end{lemma}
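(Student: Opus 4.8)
The plan is to induct on the derivation of $\mtyequiv{t}{t'}{T}$. The rules split into three groups: the rules making $\approx$ an equivalence relation (reflexivity, symmetry, transitivity), the congruence rules lifting $\approx$ through each term former, and the computation rules ($\beta$ and $\eta$). The first group is immediate, since equality in $\SetC$ is already an equivalence relation and $t,t'$ share the same typing stack $\vGamma$ (hence the same admissible environments $\vrho$); transitivity just composes two instances of the induction hypothesis at the same $\vrho$. The congruence rules are discharged by unfolding the compositional definition of evaluation in \Cref{fig:intp-nat} and applying the induction hypothesis to the immediate subterms. The only congruence case needing care is $\lambda$, where both sides are presheaf exponentials and must be compared pointwise: given any $\vgamma$ and argument $a$, one evaluates both bodies in the extended environment $(\pi,(\rho,a))$ and invokes the induction hypothesis there.

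The heart of the argument is a \emph{semantic substitution lemma}, which I would establish first as a separate result and then reuse in the $\beta$ cases. Because the paper already unifies term substitution and modal transformations into K-substitutions, a single statement of the form $\intp{t[\vsigma]}(\vrho) = \intp{t}(\intp{\vsigma}(\vrho))$ suffices, where $\intp{\vsigma}$ is the environment transformation induced by a K-substitution $\vsigma$. It is proved by induction on $t$, the delicate clauses being $\tbox$ and $\tunbox$, which rely on the truncation and offset algebra (the Distributivity of Addition and of Composition lemmas) to make the syntactic offset recomputation $\Ltotal{\vsigma}{n}$ agree with the semantic one $\Ltotal{\vrho}{n}$ while the recursion descends into $\trunc{\vsigma}{n}$ and $\trunc{\vrho}{n}$ simultaneously. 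For the $\beta$ rule of functions I specialize this to $[s/x]$: unfolding $\intp{(\lambda x. t)\ s}_{\vDelta}(\vrho)$ yields $\intp{t}_{\vDelta}((\pi,(\rho,\intp{s}_{\vDelta}(\vrho))))$ with $(\pi,\rho) := \vrho$, which the substitution lemma identifies with $\intp{t[s/x]}_{\vDelta}(\vrho)$. For the $\beta$ rule of $\square$ I specialize it to the MoT $\{n/0\}$: writing $m := \Ltotal{\vrho}{n}$, the redex $\unbox n (\boxit t)$ unfolds to $\intp{t}_{\trunc{\vDelta}{m};\cdot}(((1,\trunc{\vrho}{n}),*))[\vect\id;\Uparrow^m]$, and the substitution lemma for $\{n/0\}$ shows this equals $\intp{t\{n/0\}}_{\vDelta}(\vrho)$.

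The two $\eta$ rules are handled by \Cref{lem:presh:nat} together with the fact that evaluation ignores bindings a term does not mention. For the $\square$ rule, unfolding $\intp{\boxit{(\unbox 1 t)}}_{\vDelta}(\vrho)$ computes the leading offset to $1$ and truncates back to $\vrho$, leaving $\intp{t}_{\vDelta}(\vrho)[\vect\id;\Uparrow^1]$ where $\vect\id;\Uparrow^1 : \vDelta;\cdot \To_w \vDelta;\cdot$ acts as the identity, so the expression collapses to $\intp{t}_{\vDelta}(\vrho)$. For the function rule I compare the two exponentials pointwise: at $\vgamma$ and $a$ the right side reduces, via the $x$-lookup returning $a$ and the irrelevance of the unused binding, to $\intp{t}_{\vDelta'}(\vrho[\vgamma])$ applied to $(\vect\id,a)$; \Cref{lem:presh:nat} rewrites this as $\intp{t}_{\vDelta}(\vrho)[\vgamma]$ applied to $(\vect\id,a)$, which by the definition of the morphism action on the exponential is exactly $\intp{t}_{\vDelta}(\vrho)(\vgamma,a)$, the left side.

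The main obstacle is the semantic substitution lemma, and specifically its $\tunbox$ clause. There one must show that the semantic interpretation of a K-substitution commutes with truncation, i.e. that $\intp{\trunc{\vsigma}{n}}(\trunc{\vrho}{\Ltotal{\vsigma}{n}})$ coincides with $\trunc{(\intp{\vsigma}(\vrho))}{n}$, which is precisely the content guaranteed by the Distributivity lemmas transported to the semantic side. Getting these two truncation algebras to line up is where essentially all the real work lies; once that lemma is in place, every computation case of the present lemma is a short unfolding.
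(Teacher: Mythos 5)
Your proposal is correct and takes essentially the same route as the paper: induction on the derivation of $\mtyequiv{t}{t'}{T}$, with naturality (\Cref{lem:presh:nat}) carrying the $\square$-related cases, which is precisely the paper's one-line proof sketch. The semantic K-substitution lemma and the semantic transport of the truncation/offset distributivity properties that you develop for the $\beta$ cases are exactly the details the paper defers to its technical report, and your identification of aligning the syntactic and semantic truncation algebras as the crux matches the paper's remark that those properties ``will be used in \Cref{sec:presheaf}.''
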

\begin{proof}
  Induct on $\mtyequiv{t}{t'}{T}$ and apply naturality in most cases about $\square$.
\end{proof}

The soundness proof is established by a \emph{Kripke gluing model}. The gluing model
$t \sim a \in \glu{T}_{\;\vGamma}$ relates a syntactic term $t$ and a natural
transformation $a$, so that after reifying $a$, the resulting normal form is
equivalent to $t$:
\begin{align*}
  \glu{T}_{\;\vGamma} &\subseteq \Exp \times \intp{T}_{\;\vGamma} \\
  t \sim a \in \glu{B}_{\;\vGamma} &:= \mtyequiv{t}{a}{B} \\
  t \sim a \in \glu{\square T}_{\;\vGamma} &:= \mtyping{t}{\square T} \tand \forall
                                \vDelta. \unbox{|\vDelta|}{t} \sim a[\vect \id; \Uparrow^{|\vDelta|}] \in \glu{T}_{\vGamma; \vDelta} \\ 
  t \sim a \in \glu{S \func T}_{\;\vGamma} &:= \mtyping{t}{S \func T} \tand \forall
                                \vgamma : \vDelta \To_w \vGamma, s \sim b \in
                                \glu{S}_{\;\vDelta}. t[\vgamma]\ s \sim a(\vgamma, b) \in
                                \glu{T}_{\;\vDelta}
\end{align*}
The gluing model should be monotonic in $\vGamma$, hence Kripke. Again the gluing
model is very standard~\cite{altenkirch_categorical_1995}. It
is worth mentioning that in the $\square T$ case, the Kripke predicate effectively requires that $t$
and $a$ are related only when their results of \emph{any} $\tunbox$ing remains related.
We then can move on to prove some properties of the gluing model and define its
generalization to substitutions, which eventually allow us to conclude the soundness
theorem. Please find more details in our technical report~\cite{DBLP:journals/corr/abs-2206-07823}.


\subsection{Adaptiveness}
We emphasize that our construction is stable no matter our choice of
\RUL. Hence, our construction applies to all four modal systems, $K$, $T$,
$K4$ and $S4$ that we introduced in \Cref{sec:intro} \emph{without change}. The key insight
that allows us to keep our construction and model generic is the fact 
that K-substitutions,
K-weakenings, and $\intp{\vGamma}$ are instances of the algebra
formed by truncation and truncation offsets and
satisfy all the properties, in particular identity and 
distributivity, listed at the end of \Cref{sec:usubst}. 
More importantly, all the truncation and truncation offset functions 
are defined for all choices of \RUL thereby accommodating all four
modal systems with their varying level of $\tunbox$ing. 



\section{Contextual Types}\label{sec:contextual}

In $S4$ and a meta-programming setting, $\square$ is interpreted as stages, where a term of
type $\square T$ is considered as a term of type $T$ but available only in the next
stage. However, as pointed out in~\cite{davies_modal_2001,nanevski_contextual_2008},
$\square$ only characterizes closed code. Nanevski et al.~\cite{nanevski_contextual_2008}
propose contextual types which relativize the surrounding context of a term so
representing open code becomes possible. However, this notion of contextual types
is in the dual-context style and how contextual types can be formulated with $\tunbox$ and
context stacks remains open. In this section, we answer this question
by utilizing our notion of K-substitutions. 

\subsection{Typing Judgments and Semi-K-substitutions}

With contextual types, we augment the syntax as follows:
\begin{align*}
  S, T &:= \cdots \sep \cbox{\vDelta}{T} &
  s,t,u &:= \cdots \sep \cbox{\vDelta}{t} \sep \cunbox{t}{\svsigma}
\end{align*}
$\cbox{\vDelta}{T}$ is a contextual type. It captures a list of contexts $\vDelta$ which a
term of type $T$ can be open in. Note that $\vDelta$ here can be
empty. This notion of contextual types is very general and captures a term open in
multiple stages. $\cbox{\vDelta}{t}$ is the constructor of a contextual type, where the contexts
that it captures are specified. $\cunbox{t}{\svsigma}$ is the eliminator. Instead of
an $\tunbox$ level, we now require a different argument $\svsigma$, which is a
\emph{semi-K-substitution} storing $\tunbox$ offsets and terms. We will discuss more
very shortly. 

The introduction rule for contextual types is straightforward:
\begin{mathpar}
  \inferrule
  {\mtyping[\vGamma;\vDelta]{t}{T}}
  {\mtyping{\cbox{\vDelta}{t}}{\cbox{\vDelta}{T}}}
\end{mathpar}
If we let $\vDelta = \epsilon; \cdot$, then we recover $\square$. If we let $\vDelta =
\epsilon; \Delta$ for some $\Delta$, then we have an open term $t$ which uses only
assumptions in the same stage. If $\vDelta$ has more contexts, then $t$ is an open
term which uses assumptions from previous stages. We can also let $\vDelta =
\epsilon$. In this case, $\cbox{\epsilon}{T}$ is isomorphic to $T$ and is not too
meaningful but allowing so makes our formulation mathematically cleaner. 

The elimination rule, on the other hand, becomes significantly more complex:
\begin{mathpar}
  \inferrule
  {\mtyping[\trunc\vGamma{\sLtotal{\svsigma}}]{t}{\cbox{\vDelta}T} \\
    \svsigma : \vGamma \To_s \vDelta}
  {\mtyping{\cunbox{t}{\svsigma}}{T}}
\end{mathpar}
It is no longer enough to eliminate with just an $\tunbox$ level because the
eliminator must specify how to replace all variables in $\vDelta$ and how
contexts in $\vGamma$ and $\vDelta$ relate. This information is collectively stored in a
\emph{semi-K-substitution} $\svsigma$ (notice the semi-arrow), which intuitively is not yet a valid
K-substitution, but close:
\begin{definition}
  A semi-K-substitution $\svsigma$ is defined as follows:
  \begin{align*}
    \svsigma, \svdelta := \varepsilon \sep \sext \svsigma n \sigma
    \tag*{Semi-K-substitutions, $\SSubsts$}
  \end{align*}
  \begin{mathpar}
    \inferrule
    { }
    {\varepsilon : \vGamma \To_s \epsilon}

    \inferrule
    {\svsigma : \vGamma \To_s \vDelta \\ |\vGamma'| = n \\ \sigma : \vGamma;\vGamma' \To \Delta}
    {\sext \svsigma n \sigma : \vGamma;\vGamma' \To_s \vDelta;\Delta}
  \end{mathpar}
\end{definition}
Compared to K-substitutions, semi-K-substitutions differ in the base case, where
empty $\varepsilon$ is permitted, so they are not valid K-substitutions. However, if a
semi-K-substitution is prepended by an identity K-substitution, then the result is a
valid K-substitution. Also, $\sLtotal{\svsigma}$ computes the sum of all offsets in $\svsigma$:
\[
  \begin{array}{llp{2cm}ll}
    \vect \id; &: \SSubsts \to \Substs & & \sLtotal{\_} &: \SSubsts \to \N \\ 
    \vect \id; \varepsilon &:= \vect\id & & \sLtotal \varepsilon &:= 0\\
    \vect \id; (\sext \svsigma n \sigma) &:= \sext{(\vect \id; \svsigma)}n\sigma
                                       & & \sLtotal{\sext \svsigma n \sigma} &:=
                                                                               \sLtotal
                                                                               \svsigma
                                                                               + n
  \end{array}
\]
We can prove the following lemma:
\begin{lemma}\label{lem:ssubsts-id}
  If $\svsigma : \vGamma \To_s \vDelta$, then $\vect\id; \svsigma : \vGamma \To_s
  \trunc\vGamma{\sLtotal{\svsigma}}; \vDelta$. 
\end{lemma}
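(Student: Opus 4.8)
The plan is to induct on the derivation of $\svsigma : \vGamma \To_s \vDelta$ (equivalently, on the structure of $\svsigma$), unfolding the operations $\vect\id;\,{-}$ and $\sLtotal{-}$ by their defining equations in each case and checking that the result is a valid K-substitution of the claimed domain and codomain. In the base case $\svsigma = \varepsilon$ we have $\vDelta = \epsilon$, and both $\vect\id;\varepsilon = \vect\id$ and $\sLtotal{\varepsilon} = 0$ hold by definition; the claimed codomain $\trunc{\vGamma}{\sLtotal{\varepsilon}};\vDelta$ then collapses to $\trunc{\vGamma}{0};\epsilon = \vGamma$, since truncating by $0$ is the identity and appending $\epsilon$ is a no-op. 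Hence $\vect\id : \vGamma \To \vGamma$ already has the required type.

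For the step case, write $\svsigma = \sext{\svdelta}{n}{\sigma}$, so that by the extension rule for semi-K-substitutions $\svdelta : \vGamma_0 \To_s \vDelta'$, $|\vGamma'| = n$ and $\sigma : \vGamma_0;\vGamma' \To \Delta$, with $\vGamma = \vGamma_0;\vGamma'$ and $\vDelta = \vDelta';\Delta$. The induction hypothesis gives $\vect\id;\svdelta : \vGamma_0 \To \trunc{\vGamma_0}{\sLtotal{\svdelta}};\vDelta'$. Unfolding the prepend operation yields $\vect\id;\svsigma = \sext{(\vect\id;\svdelta)}{n}{\sigma}$, which I would feed into the K-substitution extension rule: its three premises are discharged, respectively, by the induction hypothesis, by $|\vGamma'| = n$, and by the given $\sigma$. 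This certifies $\vect\id;\svsigma$ as a K-substitution of type $\vGamma_0;\vGamma' \To (\trunc{\vGamma_0}{\sLtotal{\svdelta}};\vDelta');\Delta$.

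The only delicate step, and where I expect all the work to sit, is reconciling this codomain with the claimed $\trunc{(\vGamma_0;\vGamma')}{\sLtotal{\svsigma}};(\vDelta';\Delta)$. Since $\sLtotal{\svsigma} = \sLtotal{\svdelta} + n$ by definition, I would use commutativity of addition together with the distributivity of context-stack truncation over addition (immediate from the definition of $\trunc{\vGamma}{n}$ as dropping contexts from the top) to compute $\trunc{(\vGamma_0;\vGamma')}{\sLtotal{\svdelta} + n} = \trunc{(\trunc{(\vGamma_0;\vGamma')}{n})}{\sLtotal{\svdelta}} = \trunc{\vGamma_0}{\sLtotal{\svdelta}}$, where the last equality holds because $|\vGamma'| = n$ makes truncating off the top $\vGamma'$ return exactly $\vGamma_0$. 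Up to associativity of stack concatenation, $\trunc{\vGamma_0}{\sLtotal{\svdelta}};(\vDelta';\Delta)$ is then literally the codomain produced above, closing the induction. There is no conceptual obstacle; the difficulty is entirely the bookkeeping of truncation offsets and applying the truncation algebra in the right order, exactly as anticipated by the shared algebra of truncation and truncation offsets highlighted at the end of \Cref{sec:usubst}.
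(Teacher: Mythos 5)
Your proof is correct and follows exactly the argument the paper intends: the paper states the lemma without an inline proof, but its surrounding remarks (prepending $\vect\id$ yields a valid K-substitution; the shared algebra of truncation and truncation offsets) point precisely to your structural induction on $\svsigma$, with the base case collapsing via $\trunc{\vGamma}{0};\epsilon = \vGamma$ and the step case discharging the K-substitution extension rule and reconciling codomains through $\sLtotal{\sext{\svdelta}{n}{\sigma}} = \sLtotal{\svdelta} + n$ and $\trunc{(\vGamma_0;\vGamma')}{n + \sLtotal{\svdelta}} = \trunc{\vGamma_0}{\sLtotal{\svdelta}}$. The only cosmetic remark is that the paper's conclusion is typeset with $\To_s$, evidently a typo for $\To$, and you correctly read it as producing a genuine K-substitution.
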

This lemma is needed to justify the $\beta$ equivalence rule which we are about to
discuss.

\subsection{Equivalence of Contextual Types}

Having defined the introduction and elimination rules, we are ready to describe how
they interact. Note that the congruence rules are standard so we omit them here and
only describe the $\beta$ and $\eta$ rules:
\begin{mathpar}
  \inferrule
  {\mtyping[\trunc \vGamma {\sLtotal{\svsigma}};\vDelta]{t}{T} \\
    \svsigma : \vGamma \To_s \vDelta}
  {\mtyequiv{\cunbox{\cbox{\vDelta}{t}}{\svsigma}}{t[\vect\id; \svsigma]}{T}}

  \inferrule
  {\mtyping{t}{\cbox{\vDelta}{T}}}
  {\mtyequiv{t}{\cbox{\vDelta}{\cunbox{t}{\svect \id}}}{\cbox{\vDelta}{T}}}
\end{mathpar}
In the $\eta$ rule, $\svect \id$ denotes the identity semi-K-substitution, which is
defined as
\begin{align*}
  \svect \id_{\vDelta} &: \vGamma; \vDelta \To_s \vDelta \\
  \svect \id_{\vDelta} &:= \varepsilon; \underbrace{\id; \cdots ; \id}_{|\vDelta|}
\end{align*}
We omit the subscript whenever possible. Both rules are easily justified. In the $\beta$ rule, since $t$ is typed in the
context stack $\trunc \vGamma {\sLtotal{\svsigma}};\vDelta$, we obtain a term in
$\vGamma$ by applying $\vect\id; \svsigma$ due to Lemma \ref{lem:ssubsts-id}. In the
$\eta$ rule, by definition, we know
$\trunc{(\vGamma; \vDelta)}{\sLtotal{\svect \id}} = \vGamma$ and therefore
$\mtyping[\vGamma; \vDelta]{\cunbox{t}{\svect \id}}{T}$.

In an extensional setting, where the constructor and the eliminator of modalities are
congruent as done in this paper, we can show that the contextual type $\cbox{\epsilon;
\Delta_1; \cdots; \Delta_n}{T}$ is isomorphic to $\square(\Delta_1 \to \cdots \square
(\Delta_n \to T))$ if we view contexts $\Delta_i$ as iterative products. This implies
introducing contextual types does not increase the logical strength of the system and
the system remains normalizing. Nevertheless, contextual types given here seem to have
a natural adaptation to dependent types and set a stepping stone towards representing
open code with dependent types and therefore a homogeneous,
dependently typed meta-programming system.

\section{Related Work and Conclusion}\label{sec:related}

\subsection{Modal Type Theories}

There are many early attempts to give a constructive formulation of
modal logic, especially the modal logic S4 starting back in the 1990's~\cite{bierman_intuitionistic_1996,bierman_intuitionistic_2000,bellin_extended_2001,alechina_categorical_2001,borghuis_coming_1994,martini_computational_1996}.
Pfenning and Davies~\cite{Davies:POPL96,pfenning_judgmental_2001} give the first formulation of
$S4$ in the dual-context style where we separate the assumptions that are
valid in every world from assumptions that are true in the current
world. This leads to a dual-context style formulation that
satisfies substitution properties and has found many applications 
from staged computation to homotopy type theory (HoTT). For example, 
Shulman~\cite{shulman_brouwers_2018} extends idempotent $S4$ with dependent types, called spatial type
theory and Licata et al.~\cite{licata_internal_2018} define crisp type theory,
which removes the idempotency from spatial type theory. However, both papers do not
give a rigorous justification of their type theories.
Most recently Kavvos~\cite{kavvos_dual-context_2017} investigates modal 
systems based on this dual-context formulation for Systems $K$, $T$, $K4$ and $S4$ as well
as the L\"ob induction principle. Kavvos also gives categorical semantics for 
these systems.  

However, it has been difficult to develop direct normalization proofs
for these dual-context formulations, since we must handle extensional properties like commuting
conversions (c.f. \cite{kavvos_dual-context_2017,girard_proofs_1989}). Further, our four target systems have very different formulations in
the dual-context style as shown by Kavvos~\cite{kavvos_dual-context_2017}. As a
consequence, it is challenging to have one single normalization
algorithm for all our four target systems.

An alternative to the dual-context style is the Fitch-style
approach pursued by Clouston,
Birkedal and collaborators (see
\cite{clouston_fitch-style_2018,gratzer_implementing_2019,birkedal_modal_2020}). At the
high-level, Fitch-style systems also model the Kripke semantics, but instead of using one context for each world,
the Fitch style uses a special symbol (usually $\thelock$) to segment one context
into multiple sections, each of them representing one world. Variables to the left of
the rightmost $\thelock$ are not accessible. 
Our normalization proof and the generalization of $\lambox$ to contextual types
also can likely be adapted to those systems. 

Clouston~\cite{clouston_fitch-style_2018} gives Systems $K$ and idempotent $S4$ in
the Fitch style and discusses their categorical
semantics. Gratzer et al.~\cite{gratzer_implementing_2019} describe idempotent $S4$ with
dependent types. 
Birkedal et al.~\cite{birkedal_modal_2020} give $K$ with dependent types and formulate dependent
right adjoints, an important categorical concept of
modalities. Gratzer et al.~\cite{gratzer_multimodal_2020,gratzer_multimodal_2021,gratzer_multimodal_2022}
proposes MTT, a multimode type theory, which describes interactions between multiple
modalities. Though MTT uses $\thelock$ to segment contexts, we believe
that MTT is better understood as a generalization of the dual-context
style and is apparent in the let-based formulation of the box
elimination rule. This different treatment of the box elimination also
 makes it less obvious how to understand $\lambox$ as a subsystem
 of MTT.

 Currently, existing Fitch-style systems mostly consider idempotent $S4$ where $\square T$ is isomorphic to
$\square\square T$. However, we consider this distinction to be
important from a computational view. For example, in multi-staged programming (see
\cite{pfenning_judgmental_2001,davies_modal_2001}) $\square T$
and $\square\square T$ describe code generated in one stage and two stages,
respectively. Moreover, $\unbox 0 t$ is
interpreted as evaluating and running the code generated by $t$.
It is nevertheless possible to develop a non-idempotent $S4$ system using $\tunbox$
levels $n$ in the Fitch style by defining a function which truncates a context until its
$n$'th $\thelock$. This is however more elegantly handled in \lambox, because
worlds are separated syntactically.  For this reason, we consider \lambox
as a more versatile and more suitable foundation for developing a
dependently typed meta-programming system. In particular, our
extension to contextual types shows how we can elegantly accomodate
reasoning about open code which is important in practice. 

Though context stacks in \lambox are taken from Pfenning, Wong and Davies'
development~\cite{pfenning_judgmental_2001,davies_modal_2001},
Borghuis~\cite{borghuis_coming_1994} also uses context stacks in his
development of modal pure type systems. The elimination rules use explicit weakening
and several ``transfer'' rules while \lambox incorporates both using $\tunbox$ levels,
which we consider more convenient and more practical from a programmer's
point of view. Martini and Masini~\cite{martini_computational_1996} also use context stacks. Their system
annotates all terms with a level which we consider too verbose to be practical.

\subsection{Normalization}

For the dual-context style, Nanevski et al.~\cite{nanevski_contextual_2008} give
contextual types and prove normalization by
reduction to another logical system with permutation
conversions~\cite{de_groote_strong_1999}. This means that the proof is indirect and
does not directly yield an algorithm for normalizing
terms. Kavvos~\cite{kavvos_dual-context_2017} gives a rewriting-based normalization
proof for dual-context style systems with L\"ob induction. Most
recently, Gratzer~\cite{gratzer_multimodal_2022} proves the normalization for MTT.  It is not
clear to us whether techniques in~\cite{gratzer_multimodal_2022} scale
to dependently typed Kripke-style systems, as the system have
different treatment of the box elimination. 

There are two recent papers closely related to our work: Valliappan et al.~\cite{VRC}
and Gratzer et al.~\cite{gratzer_implementing_2019}. %
\cite{VRC} gives different simply typed formulations in the Fitch style for all four
subsystems of $S4$ and as a result, a different normalization proof must be given to
each subsystem individually. %
Gratzer et al.~\cite{gratzer_implementing_2019} follow
Abel~\cite{abel_normalization_2013} and give an NbE proof for dependently typed
idempotent $S4$. Since the proof in~\cite{gratzer_implementing_2019} is parameterized
by an extra layer of poset to model the Kripke world structure introduced by
$\square$, as pointed out in~\cite{gratzer_multimodal_2020}, this proof cannot even be
easily adapted to dependently typed $K$ (see Birkedal et
al.~\cite{birkedal_modal_2020}). %
Compared to these two papers, our model is a moderate extension to the standard
presheaf model, requiring no such extra layer and adapting to multiple logics
automatically, and we are confident that it will generalize more easily to the
dependently typed setting.  The ultimate reason why we only need one proof to handle
all four subsystems of $S4$ is that we \emph{internalize} the Kripke structure of
context stacks in the presheaf model. The internalization happens in the base
category, where MoTs are encoded as part of K-weakenings. The internalization
captures peculiar behaviours of different systems and conflates the extra Kripke
structure from context stacks and the standard model construction, so that the proofs
become much simpler and closer to the typical construction.

\subsection{Conclusion and Future Work}

In this paper, we present a normalization-by-evaluation (NbE) algorithm
for the simply-typed modal $\lambda$-calculus (\lambox) which covers
all four subsystems of S4. The key to achieving this result is our
notion of K-substitutions which 
provides a unifying account for modal
transformations and term substitutions and allows us to formulate
a substitution calculus for modal logic S4 and its various
subsystems. Such calculus is not only important from a 
practical point of view, but play also a central role in our theoretical
analysis. Using insights gained from K-substitutions we organize a presheaf model, from which we extract a normalization
algorithm. The algorithm can be implemented in conventional programming languages
and directly account for the normalization of \lambox. Deriving from 
K-substitutions, we are also able to give a formulation for contextual types with
$\tunbox$ and context stacks, which had
been challenging prior to our observation of K-substitutions and is
important for representing open code in a meta-programming setting.

This work serves as a basis for further investigations into
coproducts~\cite{altenkirch_normalization_2001} and categorical structure of context
stacks.
We also see this
work as a step towards a Martin-L\"of-style modal type theory in which open code has
an internal shallow representation. With a dependently typed extension and contextual types,
it would allow us to
develop a \emph{homogeneous} meta-programming system with dependent
types which has been challenging to achieve. 








\begin{ack}
  This work was funded by the Natural Sciences and Engineering Research Council of
  Canada (grant number 206263), Fonds de recherche du Québec - Nature et Technologies
  (grant number 253521), and Postgraduate Scholarship - Doctoral by the Natural
  Sciences and Engineering Research Council of Canada awarded to the first author.
\end{ack}

\bibliographystyle{entics}
\bibliography{ref}
%

\appendix

\section{Equivalence Rules}\label{apx:equivalence}

\begin{mathpar}
  \inferrule*
  {\mtyequiv s t T}
  {\mtyequiv t s T}

  \inferrule*
  {\mtyequiv s t T \\ \mtyequiv t u T}
  {\mtyequiv s u T}
  
  \inferrule*
  {x : T \in \Gamma}
  {\mtyequiv[\vGamma; \Gamma] x x T}

  \inferrule*
  {\mtyequiv[\vGamma; \cdot]{t}{t'}T}
  {\mtyequiv{\boxit t}{\boxit t'}{\square T}}

  \inferrule*
  {\mtyequiv{t}{t'}{\square T} \\
    |\vDelta| = n}
  {\mtyequiv[\vGamma; \vDelta]{\unbox n t}{\unbox n t'}{T'}}  

  \inferrule*
  {\mtyequiv[\vGamma;(\Gamma, x : S)]{t}{t'}T}
  {\mtyequiv[\vGamma; \Gamma]{\lambda x. t}{\lambda x. t'}{S \func T}}

  \inferrule*
  {\mtyequiv{t}{t'}{S \func T} \\
    \mtyequiv{s}{s'}S}
  {\mtyequiv{t\ s}{t'\ s'}T}

  \inferrule*
  {\mtyping[\vGamma; \cdot]{t}{T} \\
    |\vDelta| = n}
  {\mtyequiv[\vGamma; \vDelta]{\unbox{n}{(\boxit t)}}{t\{n / 0 \}}{T}}

  \inferrule*
  {\mtyping[\vGamma;(\Gamma, x : S)]t T \\ \mtyping[\vGamma; \Gamma] s S}
  {\mtyequiv[\vGamma; \Gamma]{(\lambda x. t) s}{t[s/x]}{T}}

  \inferrule*
  {\mtyping{t}{\square T}}
  {\mtyequiv{t}{\boxit{\unbox 1 t}}{\square T}}
  
  \inferrule*
  {\mtyping{t}{S \func T}}
  {\mtyequiv t {\lambda x. (t\ x)}{S \func T}}
\end{mathpar}

\end{document}